\numberwithin{equation}{section}
\newcommand{\be}{\begin{equation}}
\newcommand{\ee}{\end{equation}}
\renewcommand{\arraystretch}{0.7}
\newcommand{\bi}{\begin{itemize}}
\newcommand{\ei}{\end{itemize}}
\newcommand{\bc}{\begin{center}}
\newcommand{\ec}{\end{center}}
\newcommand{\ben}{\begin{eqnarray}}
\newcommand{\een}{\end{eqnarray}}
\newcommand{\benn}{\begin{eqnarray*}}
\newcommand{\eenn}{\end{eqnarray*}}
\newcommand{\beq}{\begin{equation}}
\newcommand{\eeq}{\end{equation}}
\newtheorem{proposition}{Proposition}[section]
\newenvironment{proof}{\begin{trivlist}
    \item[\hskip\labelsep{\it Proof.}]}{$\hfill\Box$ \end{trivlist}}
\begin{document}

\begin{frontmatter}

\title{Managing Systematic Mortality Risk in\\ Life Annuities: An Application of Longevity Derivatives}

This version: \today

\author[address1]{Man Chung Fung},
\author[address2]{Katja Ignatieva},
\author[address3]{Michael Sherris}

\begin{abstract}
This paper assesses the hedge effectiveness of an index-based longevity swap and a longevity cap. Although swaps are a natural instrument for hedging longevity risk, derivatives with non-linear pay-offs, such as longevity caps, also provide downside protection. A tractable stochastic mortality model with age dependent drift and volatility is developed and analytical formulae for prices of these longevity derivatives are derived. Hedge effectiveness is considered for a hypothetical life annuity portfolio. The hedging of the life annuity portfolio is comprehensively assessed for a range of assumptions for the market price of longevity risk, the term to maturity of the hedging instruments, as well as the size of the underlying annuity portfolio. The model is calibrated using Australian mortality data. The results provide a comprehensive analysis of longevity hedging, highlighting the risk management benefits and costs of linear and nonlinear payoff structures.

\end{abstract}

\address[address1]{School of Risk and Actuarial Studies, University of New South Wales, Sydney, Australia (m.c.fung@unsw.edu.au)}
\address[address2]{School of Risk and Actuarial Studies, University of New South Wales, Sydney, Australia (k.ignatieva@unsw.edu.au)}
\address[address3]{CEPAR, School of Risk and Actuarial Studies, University of New South Wales, Sydney, Australia (m.sherris@unsw.edu.au)}

\begin{keyword}
longevity risk management; longevity swaps; longevity options; hedge effectiveness
\end{keyword}
JEL Classification: G22, G23, G13

\maketitle\thispagestyle{empty}

\end{frontmatter}

\newpage


\section{Introduction}

Securing a comfortable living after retirement is fundamental to the majority of the working population around the
world. A major risk in retirement, however, is the possibility that
retirement savings will be outlived. Products that provide
guaranteed lifetime income, such as life annuities, need to be offered in a cost effective way while maintaining the long run solvency of the provider. Annuity providers
and pension funds need to manage the systematic
mortality risk\footnote{From an annuity provider's perspective,
longevity risk modelling can lead to a (stochastically) over- or
underestimation of survival probabilities for all annuitants. For
this reason longevity risk is also referred to as the systematic
mortality risk.}, associated with random changes in the underlying
mortality intensity, in a life annuity or pension portfolio. Systematic
mortality risk cannot be diversified away with increasing portfolio
size, while idiosyncratic mortality risk, representing the
randomness of deaths in a portfolio with fixed mortality intensity,
is diversifiable.

Reinsurance has been important in managing longevity risk for annuity and pension providers. However, there are concerns that reinsurers have a limited risk appetite and
are reluctant to take this ``toxic" risk (\cite{BlakeCaDoMa06}). In
fact, even if they were willing to accept the risk, the reinsurance
sector is not deep enough to absorb the vast scale of longevity risk
currently undertaken by annuity providers and pension
funds.\footnote{It is estimated that pension assets for the 13
largest major pension markets have reached nearly 30 trillions in
2012 (Global Pension Assets Study 2013, Towers Watson).} The
sheer size of capital markets and an almost zero correlation between
financial and demographic risks, suggests that they will increasingly take a role in the risk management of longevity risk. 

The first
generation of capital market solutions for longevity risk, in the form of mortality and longevity bonds ( \cite{BlakeBurrows2001},
\cite{Blake2006a} and \cite{BauerBoRu10})\footnote{Of particular
interest is an attempt to issue the EIB longevity bond by the
European Investment Bank (EIB) in 2004, which was underwritten by
BNP Paribas. This bond was not well received by investors and could
not generate enough demand to be launched due to its deficiencies,
as outlined in \cite{Blake2006a}.}, gained limited success. 

The second generation involving forwards and swaps have attracted increasing interest (\cite{Blakeetal13}). Index-based instruments aim to mitigate systematic mortality risk, and have the potential to be less costly and are designed to allow trading as standardised contracts (\cite{Blakeetal13}).
Unlike the bespoke or customized hedging instruments such as reinsurance, they do not cover idiosyncratic
mortality risk and give rise to basis risk
(\cite{LiHa11}). Since idiosyncratic
mortality risk is reduced for larger portfolios, portfolio size is an important factor that
determines the hedge effectiveness of index-based instruments.

Longevity
derivatives with a linear payoff, including q-forwards and S-forwards,
have as an underlying the mortality and the survival rate,
respectively (\cite{LLMA10a}). Their hedge effectiveness has been considered in \cite{NgaiSh11} who study the effectiveness of static hedging of
longevity risk in different annuity portfolios. They consider a range of
longevity-linked instruments including q-forwards, longevity
bonds and longevity swaps as hedging instruments to
mitigate longevity risk and demonstrate their benefits in reducing longevity risk. \cite{LiHa11} also consider
hedging longevity risk with a portfolio of q-forwards. They highlight
basis risk as one of the obstacles in the development of
an index-based longevity market. 

Longevity derivatives with
a nonlinear payoff structure have not received a great deal of attention to date. \cite{BoyerSt13} evaluate European and American type
survivor options using simulations and \cite{WangYa13} propose and price
survivor floors under an extension of the Lee-Carter model. These authors do not consider the hedge effectiveness of longevity options and longevity swaps as hedging instruments.

Although dynamic hedging has been considered, because of the lack of liquid markets in longevity risk, static hedging remains the only realistic option for annuity providers. \cite{Cairns11} considers q-forwards and a discrete-time delta hedging strategy, and compares it
with static hedging. The lack of analytical formulas
for pricing q-forwards and its derivatives, known as ``Greeks",
can be a significant problem in assessing hedge effectiveness since simulations within simulations
are required for dynamic hedging strategies. The importance of tractable models
has also been emphasised in \cite{LucianoReVi12} who also consider dynamic hedging for longevity and interest rate risk. \cite{HariWaMeNi08} apply a generalised
two-factor Lee-Carter model to investigate the impact of longevity
risk on the solvency of pension annuities.

This paper provides pricing analysis of longevity derivatives, as
well as their hedge effectiveness. We consider static hedging. A longevity swap and a cap
are chosen as linear and nonlinear products
to compare and assess index-based capital market products management of longevity risk management. The model used for this analysis is a continuous time model for mortality with age based drift and volatility, allowing tractable analytical formulae for pricing and hedging. The analysis is based on a hypothetical life annuity portfolio subject to longevity risk. The paper considers the hedging of longevity risk using a longevity swap
and a longevity cap, a portfolio
of S-forwards and longevity caplets respectively, based on a range of different underlying assumptions for the market
price of longevity risk, the term to maturity of hedging
instruments, as well as the size of the underlying annuity
portfolio.

The paper is organised as follows. Section 2 specifies the
two-factor Gaussian mortality model, and its parameters are estimated
using Australian males mortality data. Section 3 analyses longevity
derivatives, in particular, a longevity swap and a cap, from a
pricing perspective. Explicit pricing formulas are derived under the
proposed two-factor Gaussian mortality model. Section 4 examines various hedging
features and hedge effectiveness of a longevity swap and a cap on a
hypothetical life annuity portfolio exposed to longevity risk.
Section 5 summarises the results and provides concluding remarks.

\section{Mortality Model}\label{sec:model}

Let $(\Omega,\mathcal{F}_t=\mathcal{G}_t \vee
\mathcal{H}_t,\mathbb{P})$ be a filtered probability space where
$\mathbb{P}$ is the real world probability measure. The
subfiltration $\mathcal{G}_t$ contains information about the
dynamics of the mortality intensity while death times of individuals
are captured by $\mathcal{H}_t$. It is assumed that the interest
rate $r$ is constant where $B(0,t) = e^{-r\,t}$ denotes the price of
a $t$-year zero coupon bond, and our focus is on the modelling of
stochastic mortality.

\subsection{Model Specification}

For the purpose of financial risk management applications one
requires stochastic mortality model that is tractable, and is able to
capture well the mortality dynamics for different ages. We work
under the affine mortality intensity framework and assume the
mortality intensity to be Gaussian such that analytical prices can
be derived for longevity options, as described in
Section~\ref{sec:derivativepricing}. Gaussian mortality models have
been considered in \cite{BauerBoRu10} and \cite{BlackburnSh13} within
the forward mortality framework. \cite{LucianoVi08} suggest Gaussian
mortality where the intensity follows the Ornstein-Uhlenbeck
process. In addition, \cite{JevticLuVi13} consider a continuous time
cohort model where the underlying mortality dynamics is Gaussian.

We consider a two-factor Gaussian mortality model for the mortality
intensity process $\mu_{x+t}(t)$ of a cohort aged $x$ at time
$t=0$\footnote{For simplicity of notation we replace $\mu_{x+t}(t)$
by $\mu_x(t)$.}:
\begin{equation}\label{eq:mi}
d\mu_x(t) = dY_1(t)+dY_2(t),
\end{equation}
where
\begin{align}
dY_1(t)&=\alpha_1 Y_1(t)\,dt+\sigma_1\,dW_1(t) \label{eq:dY1}\\
dY_2(t)&=(\alpha \,x+\beta)\,Y_2(t)\,dt+\sigma e^{\gamma x}\,dW_2(t) \label{eq:dY2}
\end{align}
and $dW_1dW_2=\rho\,dt$. The first factor $Y_1(t)$ is a general
trend for the intensity process that is common to all ages. The
second factor $Y_2(t)$ depends on the initial age through the drift
and the volatility terms.\footnote{We can in fact replace $x$ by
$x+t$ in Eq.~\eqref{eq:dY2}. Using $x+t$ will take into account the
empirical observation that the volatility of mortality tends to
increase along with age $x+t$ (Figures ~\ref{fig:mort_rate} and
~\ref{fig:diff_mort_rate}). However, for a Gaussian process the
intensity will have a non-negligible probability of reaching
negative value when the volatility from the second factor ($\sigma
e^{\gamma(x+t)}$) becomes very high, which occurs for example when $x+t > 100$
(given $\gamma>0$). Using $x$ instead of $x+t$ will also make the
result in Section~\ref{sec:derivativepricing} easy to interpret.
For these reasons we assume that the second factor $Y_2(t)$ depends on
the initial age $x$ only.} The initial values $Y_1(0)$ and $Y_2(0)$
of the factors are denoted by $y_1$ and $y^x_2$, respectively. The
model is tractable and for a specific choice of the parameters (when
$\alpha=\gamma =0$) has been applied to short rate modelling in
\cite{BrigoMe07}.

\begin{proposition}\label{prop:sp}
Under the two-factor Gaussian mortality model (Eq.~\eqref{eq:mi} -
~\eqref{eq:dY2}), the $(T-t)$- year expected survival probability of
a person aged $x+t$ at time $t$, conditional on filtration
$\mathcal{F}_t$, is given by
\begin{align}\label{eq:sp}
S_{x+t}(t,T)&\stackrel{\text{def}}{=}E^\mathbb{P}_t\left(e^{-\int^T_t \mu_{x}(v)dv}\right)=e^{\frac{1}{2}\Gamma(t,T) - \Theta(t,T)}, 
\end{align}
where, using $\alpha_2 = \alpha x + \beta$ and $\sigma_2 = \sigma e^{\gamma x}$,
\begin{align}
\Theta(t,T) &= \frac{(e^{\alpha_1 (T-t)}-1)}{\alpha_1}Y_1(t)+\frac{(e^{\alpha_2 (T-t)}-1)}{\alpha_2}Y_2(t) \label{eq:Theta}\hspace{+0.2cm}\mbox{and}\\
\Gamma(t,T) &=\sum^2_{k=1}\frac{\sigma^2_k}{\alpha^2_k}\left(T-t-\frac{2}{\alpha_k}e^{\alpha_k(T-t)}+\frac{1}{2\alpha_k}
e^{2\alpha_k(T-t)}+\frac{3}{2\alpha_k}\right)+ \notag \\
&\frac{2\rho\sigma_1\sigma_2}{\alpha_1\alpha_2}\left(T-t-
\frac{e^{\alpha_1(T-t)}-1}{\alpha_1}-\frac{e^{\alpha_2(T-t)}-1}{\alpha_2}+\frac{e^{(\alpha_1+\alpha_2)(T-t)}-1}{\alpha_1+\alpha_2}\right).\label{eq:Gamma}
\end{align}
are the mean and the variance of the integral $\int^T_t \mu_{x}(v)\,dv$, which is Gaussian distributed, respectively.
\end{proposition}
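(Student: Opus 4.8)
The plan is to exploit the affine--Gaussian structure of the model. Since $Y_1$ and $Y_2$ each solve a linear SDE driven by (correlated) Brownian motion, both are Gaussian processes, and hence so is their sum $\mu_x(\cdot)=Y_1(\cdot)+Y_2(\cdot)$; consequently the integral $\int_t^T\mu_x(v)\,dv$ is, conditional on $\mathcal{F}_t$, a Gaussian random variable. Once this is granted, the survival probability collapses to the Laplace transform of a Gaussian: if $X\sim N(m,V)$ then $E(e^{-X})=e^{-m+V/2}$. It therefore suffices to identify $\Theta(t,T)$ as the conditional mean and $\Gamma(t,T)$ as the conditional variance of that integral, which is exactly the form claimed in Eq.~\eqref{eq:sp}.

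To obtain the moments I would first solve the two SDEs explicitly. Writing $\alpha_2=\alpha x+\beta$ and $\sigma_2=\sigma e^{\gamma x}$, each factor is an Ornstein--Uhlenbeck process with
\[
Y_k(s)=e^{\alpha_k(s-t)}Y_k(t)+\sigma_k\int_t^s e^{\alpha_k(s-u)}\,dW_k(u),\qquad k=1,2.
\]
Taking the $\mathcal{F}_t$-conditional expectation removes the Wiener integral, leaving $E_t[Y_k(s)]=e^{\alpha_k(s-t)}Y_k(t)$; integrating over $s\in[t,T]$ and summing the two factors reproduces $\Theta(t,T)$ in Eq.~\eqref{eq:Theta} immediately.

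For the variance the key manoeuvre is the stochastic Fubini theorem. Substituting the explicit solution into $\int_t^T Y_k(v)\,dv$ and interchanging the $dv$ and $dW_k$ integrations converts the iterated integral into a single Wiener integral with a deterministic kernel,
\[
\int_t^T Y_k(v)\,dv=(\text{deterministic})+\frac{\sigma_k}{\alpha_k}\int_t^T\bigl(e^{\alpha_k(T-u)}-1\bigr)\,dW_k(u).
\]
The variance of each such integral and the covariance between the two then follow from the It\^o isometry together with $dW_1\,dW_2=\rho\,dt$, producing three ordinary integrals in $u$. After the substitution $s=T-u$ these are elementary exponential integrals whose evaluation yields the two $\sigma_k^2/\alpha_k^2$ terms and the cross term appearing in Eq.~\eqref{eq:Gamma}, thereby identifying $\Gamma(t,T)$ as the sought variance.

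The genuinely delicate step is the stochastic Fubini interchange, which must be justified (the integrand is jointly measurable and square-integrable in $(v,u)$, so the classical version of the theorem applies); everything afterwards is careful but routine exponential bookkeeping, with the cross term the most error-prone piece since it couples the two mean-reversion rates $\alpha_1$ and $\alpha_2$.
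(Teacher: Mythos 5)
Your proposal is correct and follows essentially the same route as the paper: both reduce $\int_t^T\mu_x(v)\,dv$ to a deterministic part plus the single Wiener integral $\frac{\sigma_k}{\alpha_k}\int_t^T\bigl(e^{\alpha_k(T-u)}-1\bigr)\,dW_k(u)$ for each factor, then read off $\Theta$ and $\Gamma$ and apply the Gaussian moment generating function. The only difference is the device used to collapse the iterated integral --- the paper uses stochastic integration by parts where you invoke the stochastic Fubini theorem --- and both yield the identical kernel, so the remaining bookkeeping coincides.
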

We will use the fact that the integral $\int^T_t \mu_{x}(v)\,dv$ is
Gaussian with known mean and variance to derive
analytical pricing formulas for longevity options in
Section~\ref{sec:derivativepricing}.
\begin{proof}
Solving Eq.~\eqref{eq:dY1} to obtain an integral form of $Y_1(t)$, we have
\begin{equation}\label{eq:2terms}
\int^T_tY_1(u)\,du=\int^T_tY_1(t)e^{\alpha_1(u-t)}du+\int^T_t\sigma_1\int^u_te^{\alpha_1(u-v)}dW_1(v)du.
\end{equation}
The first term in Eq.~\eqref{eq:2terms} can be simplified to
\begin{equation*}
\int^T_tY_1(t)e^{\alpha_1(u-t)}du=\frac{\left(e^{\alpha_1(T-t)}-1\right)}{\alpha_1}Y_1(t).
\end{equation*}
For the second term, we have
\begin{align*}
&\sigma_1 \int^T_te^{\alpha_1 u} \int^u_t e^{-\alpha_1v}dW_1(v)du=\sigma_1\int^T_t\int^u_t e^{-\alpha_1v}dW_1(v)d_u\left(\frac{1}{\alpha_1}e^{\alpha_1 u}\right) \\
&=\frac{\sigma_1}{\alpha_1}\int^T_td_u\left(e^{\alpha_1 u}\int^u_t e^{-\alpha_1 v} dW_1(v)\right) -\frac{\sigma_1}{\alpha_1}\int^T_te^{\alpha_1 u} d_u\left(\int^u_t e^{-\alpha_1 v} dW_1(v)\right) \\
&= \frac{\sigma_1}{\alpha_1}e^{\alpha_1T}\int^T_te^{-\alpha_1 u }dW_1(u)-\frac{\sigma_1}{\alpha_1}
\int^T_t e^{\alpha_1 u}e^{-\alpha_1u}dW_1(u)=\frac{\sigma_1}{\alpha_1}\int^T_te^{\alpha_1(T-u)}-1\,dW_1(u),
\end{align*}
where stochastic integration by parts is applied in the second
equality. 

To obtain an integral representation for $Y_2(t)$, we follow the same steps as above, replacing
$Y_1(t)$ by $Y_2(t)$ in Eq.~\eqref{eq:2terms}. It is then
straightforward to notice that
\begin{equation}\label{eq:mort2term}
\int^T_t\mu_x(u)\,du = \int^T_t Y_1(u)+Y_2(u)\,du
\end{equation}
is a Gaussian random variable with mean $\Theta(t,T)$
(Eq.~\eqref{eq:Theta}) and variance $\Gamma(t,T)$
(Eq.~\eqref{eq:Gamma}). Equation \eqref{eq:sp} is obtained by
applying the moment generating function of a Gaussian random
variable.
\end{proof}

\subsection{Parameter Estimation}\label{subsec:estimation}

The discretised process, where the intensity is assumed to be
constant over each integer age and calendar year, is approximated by
the central death rates $m(x,t)$ (\cite{WillsSh11}).
Figure~\ref{fig:mort_rate} displays Australian male central death
rates $m(x,t)$ for years $t=1970,1971,\dots,2008$ and ages $x =
60,61,\dots,95$. Figure~\ref{fig:diff_mort_rate} shows the
difference of the central death rates $\Delta m(x,t) =
m(x+1,t+1)-m(x,t)$. The variability of $\Delta m(x,t)$ is evidently
increasing with increasing age $x$, which leads to the anticipation
that $\gamma>0$. Furthermore, for a fixed age $x$, there is a slight
improvement in central death rates for more recent years, compared
to the past.

\begin{figure}[h]
  \begin{center}
   \includegraphics[width=12cm, height=9cm]{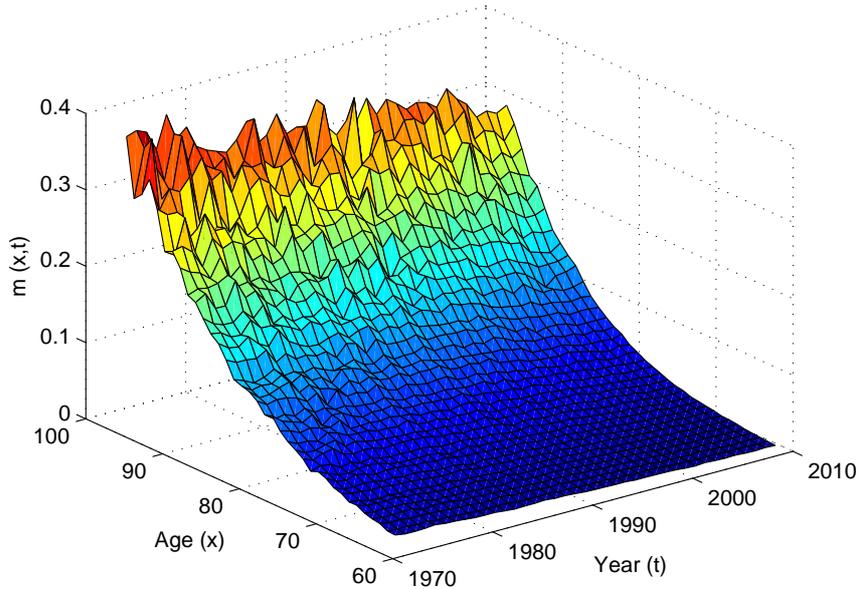}
  \caption{Australian male central death rates $m(x,t)$ where $t =1970,1971,\dots,2008$ and $x = 60,61,\dots,95$.}\label{fig:mort_rate}
    \end{center}
\end{figure}

\begin{figure}[h]
  \begin{center}
   \includegraphics[width=12cm, height=9cm]{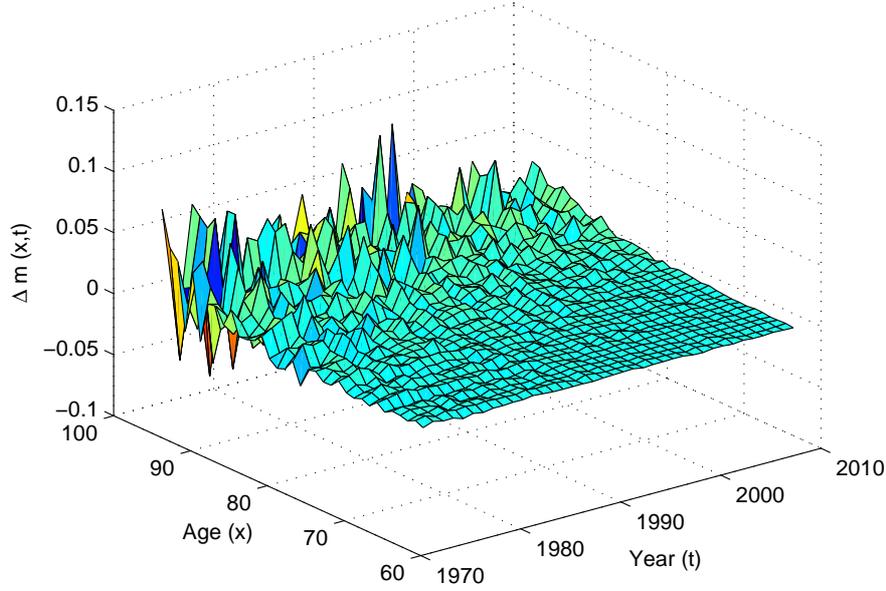}
  \caption{Difference of the central death rates $\Delta m(x,t) = m(x+1,t+1)-m(x,t)$ where $t =1970,1971,\cdots,2007$ and $x = 60,61,\cdots,94$.}\label{fig:diff_mort_rate}
    \end{center}
\end{figure}

The parameters $\{\sigma_1,\sigma, \gamma,\rho\}$, which determine
the volatility of the intensity process, are estimated as described below. As in \cite{JevticLuVi13}, we aim to estimate parameters using the method of least squares, thus, calibrating the model to the mortality surface. However, we take advantage of the fact that a Gaussian model is employed where the variance of the model can be calculated explicitly and thus, we capture the diffusion part of the process by matching the variance of the model to mortality data. Specifically, the implemented procedure is as specified below:

\begin{enumerate}
  \item Using empirical data for ages $x = 60, 65,\dots, 90$ we evaluate the sample variance of $\Delta m(x,t)$ across time, denoted by $\text{Var}(\Delta m_x)$.
  \item The model variance $\text{Var}(\Delta \mu_x)$ for age $x$ is given by
      \begin{align}\nonumber
      \text{Var}(\Delta \mu_x) &= \text{Var}(\sigma_1\Delta W_1 + \sigma e^{\gamma x} \Delta W_2)\\
      &=\left(\sigma_1^2+2\sigma_1\sigma\rho e^{\gamma x}+\sigma^2e^{2\gamma x}\right)\Delta t.
      \end{align}
      Since the difference between the death rates is computed in yearly terms, we set $\Delta t =1$.
  \item The parameters $\{\sigma_1,\sigma, \gamma,\rho\}$ are then estimated by fitting the model variance $\text{Var}(\Delta \mu_x)$ to the sample variance $\text{Var}(\Delta m_x)$ for ages $x = 60, 65,\dots, 90$ using least squares estimation, that is, by minimising
      \begin{equation}
      \sum^{90}_{x=60,65\dots}\left(\text{Var}(\Delta \mu_x|\sigma_1,\sigma,\gamma,\rho)-\text{Var}(\Delta m_x)\right)^2
      \end{equation}
      with respect to the parameters $\{\sigma_1,\sigma, \gamma,\rho\}$.
\end{enumerate}
The remaining parameters $\{\alpha_1,\alpha, \beta, y_1, y^{65}_2,
y^{75}_2\}$ are then estimated as described below\footnote{We
calibrate the model for ages 65 and 75 simultaneously to obtain
reasonable values for $\alpha$ and $\beta$ since the drift of the
second factor $Y_2(t)$ is age-dependent.}:
\begin{enumerate}
  \item From the central death rates, we obtain empirical survival curves for cohorts aged 65 and 75 in 2008. The survival curve is obtained by setting
      \begin{equation}
      \hat{S}_x(0,T)=\prod^T_{v=1}(1-m(x+v-1,0))
      \end{equation}
      where $m(x,t)$ is the central death rate of an $x$ years old at time $t$.\footnote{Here $t=0$ represents calendar year 2008 and we approximate the 1-year survival probability $e^{-m(x+v-1,0)}$ by $1-m(x+v-1,0)$.}
  \item The parameters $\{\alpha_1,\alpha, \beta, y_1, y^{65}_2, y^{75}_2\}$ are then estimated by fitting the survival curves ($S_x(0,T)$) of the model to the empirical survival curves using least squares estimation, that is, by minimising
      \begin{equation}
      \sum_{x=65,75}\sum^{T_x}_{j=1}\left(\hat{S}_x(0,j)-S_x(0,j)\right)^2
      \end{equation}
      where $T_{65}=31$ and $T_{75}=21$, with respect to the parameters $\{\alpha_1,\alpha, \beta, y_1, y^{65}_2, y^{75}_2\}$.
\end{enumerate}

The estimated parameters are reported in Table~\ref{table:modelparas}. 
Since $\gamma>0$ we observe that the volatility of the process is higher for older (initial) age $x$.

\begin{table}[!ht]
\center \setlength{\tabcolsep}{1em}
\renewcommand{\arraystretch}{1.1}
\small{\caption{Estimated model parameters.}\label{table:modelparas}
\begin{tabular}{lllll}
\hline
\hline
$\sigma_1$  & $\sigma$ & $\gamma$ & $\rho$ & $\alpha_1$ \\
\hline
0.0022465 & 0.0000002  & 0.129832  & -0.795875 & 0.0017508   \\
\hline
\hline
 $\alpha$ & $\beta$ & $y_1$ & $y^{65}_2$ & $y^{75}_2$ \\
\hline
0.0000615  & 0.120931 & 0.0021277 & 0.0084923 & 0.0294695 \\
\hline
\hline
\end{tabular}}
\end{table}

\begin{figure}[h]
  \begin{center}
   \includegraphics[width=16cm, height=12cm]{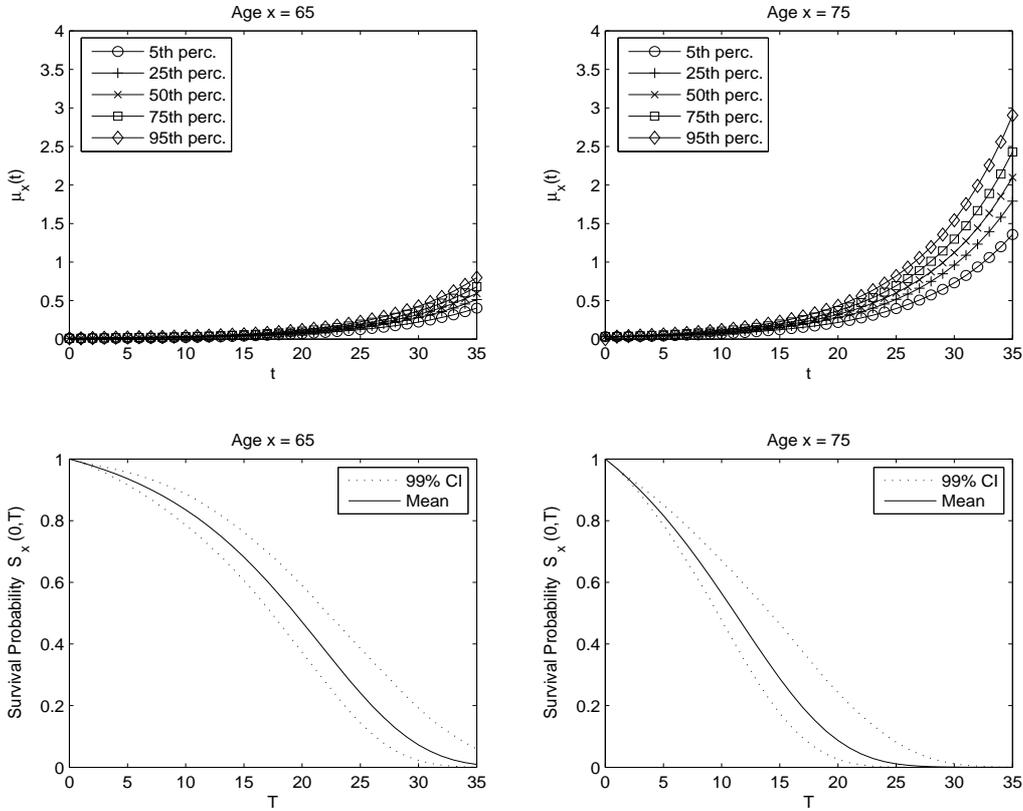}
  \caption{Percentiles of the simulated intensity processes $\mu_{65}(t)$ and $\mu_{75}(t)$ for Australian males aged $65$ (upper left panel) and $75$ (upper right panel) in 2008, with their corresponding survival probabilities (the mean and the $99\%$ confidence bands) for a $65$ years old (lower left panel) and $75$ years old (lower right panel).}\label{fig:sim_mu}
    \end{center}
\end{figure}

The upper panel of Figure~\ref{fig:sim_mu} shows the percentiles of
the simulated mortality intensity for ages 65 and 75 in the left and
the right panel, respectively. One observes that the volatility of the
mortality intensity is higher for a 75 year old compared to a 65
year old. Corresponding survival probabilities are displayed in the
lower panel of Figure~\ref{fig:sim_mu}, together with the $99\%$
confidence bands computed pointwise. As it is pronounced from the
figures, the two-factor Gaussian model specified above, despite its
simplicity, produces reasonable mortality dynamics for ages 65 and
75.

\section{Analytical Pricing of Longevity Derivatives}\label{sec:derivativepricing}

We consider longevity derivatives with different payoff structures
including longevity swaps, longevity caps and longevity floors.
Closed form expressions for prices of these longevity derivatives
are derived under the assumption of the two-factor Gaussian
mortality model introduced in Section~\ref{sec:model}. These
instruments are written on survival probabilities and their
properties are analysed from a pricing perspective.

\subsection{Risk-Adjusted Measure}

For the purpose of no-arbitrage valuation, we require the dynamics of the factors $Y_1(t)$ and $Y_2(t)$ to be written under a risk-adjusted measure.\footnote{Since the longevity market is still in its development stage and hence, incomplete, we assume a risk-adjusted measure exists but is not unique.} To preserve the tractability of the model, we assume that the processes $\tilde{W}_1(t)$ and $\tilde{W}_2(t)$ with dynamics
\begin{align}
d\tilde{W}_1(t)&=dW_1(t) \\
d\tilde{W}_2(t)&=\lambda \sigma e^{\gamma x}\, Y_2(t)\,dt+dW_2(t) \label{eq:QW2}
\end{align}
are standard Brownian motions under a risk-adjusted measure
$\mathbb{Q}$. In Eq.~\eqref{eq:QW2} $\lambda$ represents the market
price of longevity risk.\footnote{For simplicity, we assume that
there is no risk adjustment for the first factor $Y_1$ and $\lambda$
is age-independent.} Under $\mathbb{Q}$ we can write the factor
dynamics as follows:
\begin{align}
dY_1(t)&=\alpha_1 Y_1(t)\,dt+\sigma_1\,d\tilde{W}_1(t) \label{eq:dY1_Q}\\
dY_2(t)&=(\alpha x +\beta-\lambda \sigma e^{\gamma x})\, Y_2(t)\,dt+\sigma_2\,d\tilde{W}_2(t).
\end{align}
The corresponding risk-adjusted survival probability is given by
\begin{equation}\label{eq:ra_sp}
\tilde{S}_{x+t}(t,T) \stackrel{\text{def}}{=}
E^\mathbb{Q}_t\left(e^{-\int^T_t
\mu_x(v)\,dv}\right)=e^{\frac{1}{2}\tilde{\Gamma}(t,T)-\tilde{\Theta}(t,T)}
\end{equation}
where $\alpha_2 = \alpha x+\beta$ is replaced by $(\alpha x + \beta
-\lambda \sigma e^{\gamma x})$ in the expressions for
$\tilde{\Theta}(t,T)$ and $\tilde{\Gamma}(t,T)$, see
Eq.~\eqref{eq:Theta} and Eq.~\eqref{eq:Gamma}, respectively.

Since a liquid longevity market is yet to be developed, we aim to
determine a reasonable value for $\lambda$ based on the longevity
bond announced by BNP Paribas and European Investment Bank (EIB) in
2004 as proposed in  \cite{CairnsBlDo06b} and applied in
\cite{MeyrickeSh14}, see also \cite{WillsSh11}. The BNP/EIB longevity bond is a 25-year bond
with coupon payments linked to a survivor index based on the
realised mortality rates.\footnote{The issue price was determined by
BNP Paribas using anticipated cash flows based on the 2002-based
mortality projections provided by the UK Government Actuary's
Department.} The price of the longevity bond is given by
\begin{equation}
V(0) = \sum^{25}_{T=1} B(0,T)\,e^{\delta\, T} E^\mathbb{P}_0\left({e^{-\int^T_0 \mu_x(v)\,dv}}\right)
\end{equation}
where $\delta$ is a spread, or an average risk premium per
annum\footnote{The spread $\delta$ depends on the term of the bond
and the initial age of the cohort being tracked
(\cite{CairnsBlDo06b}), and $\delta$ is related to but distinct from
$\lambda$, the market price of longevity risk.}, and the T-year
projected survival rate is assumed to be the T-year survival
probability for the Australian males cohort aged 65 as modelled in
Section~\ref{sec:model}, see Eq.~\eqref{eq:sp}. Since the
BNP/EIB bond is priced based on a yield of 20 basis points below
standard EIB rates (\cite{CairnsBlDo06b}), we have the spread
of $\delta = 0.002$.\footnote{The reference cohort for the BNP/EIB
longevity bond is the England and Wales males aged 65 in 2003. Since the longevity derivatives market
is under-developed in Australia, we assume that the same spread of $\delta=0.002$ (as in the UK) is applicable to the
Australian males cohort aged 65 in 2008. Note however that sensitivity analyses will be performed in Section \ref{sec:hypotheticalexample}.}

Under a risk-adjusted measure $\mathbb{Q}(\lambda)$, the price of the longevity bond corresponds to
\begin{equation}
V^{\mathbb{Q(\lambda)}}(0) = \sum^{25}_{T=1} B(0,T)\,E^{\mathbb{Q}(\lambda)}_0\left({e^{-\int^T_0 \mu_x(v)\,dv}}\right).
\end{equation}
Fixing the interest rate to $r=4\%$, we find a model-dependent
$\lambda$, such that the risk-adjusted bond price
$V^{\mathbb{Q(\lambda)}}(0)$ matches the market bond price $V(0)$ as
close as possible. For example, for $\lambda=8.5$ we have $V(0) =
11.9045$ and $V^{\mathbb{Q(\lambda)}}(0)=11.9068$. For more details
on the above procedure refer to \cite{MeyrickeSh14}. In the
following we assume that the risk-adjusted measure $\mathbb{Q}$ is
determined by a unique value of $\lambda$.

\begin{figure}[h]
\begin{center}
\includegraphics[width=12cm, height=8cm]{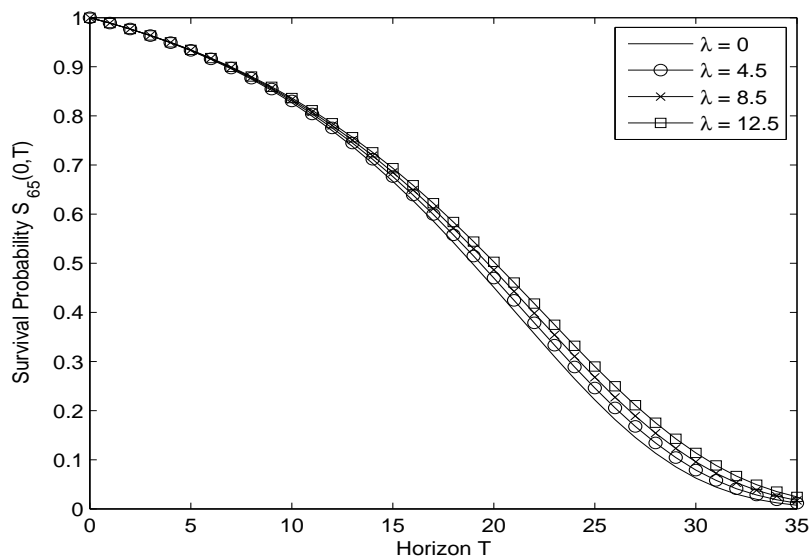}
\caption{Risk-adjusted survival probability with respect to different market price of longevity risk $\lambda$.}\label{fig:RiskAdjusted}
\end{center}
\end{figure}

Figure~\ref{fig:RiskAdjusted} shows the risk-adjusted survival
probabilities for Australian males aged 65 with respect to different
values of the market price of longevity risk $\lambda$. As one observes from the figure, a larger (positive) value of $\lambda$ leads to an improvement in survival probability, while a smaller values of $\lambda$ indicate a decline in survival probability under the risk-adjusted measure $\mathbb{Q}$.

\subsection{Longevity Swaps}\label{subsec:swaps}

A longevity swap involves counterparties swapping fixed payments for
payments linked to the number of survivors in a reference population
in a given time period, and can be thought of as a portfolio of
S-forwards, see \cite{Dowd2003}. An S-forward, or `survivor' forward has been developed by \cite{LLMA10}. Longevity swaps can be regarded as a stream of S-forwards with different maturity dates. One of the advantages of using S-forwards is that there is no initial capital requirement at the inception of the contract and cash flows occur only at maturity. 

Consider an annuity provider who has an obligation to
pay an amount dependent on the number of survivors, and hence, survival probability of a cohort at time $T$. If
longevity risk is present, the survival probability is
stochastic. In order to protect himself from a larger-than-expected
survival probability, the provider can enter into an S-forward
contract paying a fixed amount $K\in(0,1)$ and receiving an amount
equal to the realised survival probability
$\exp{\{-\int^{T}_0\mu_x(v)\,dv\}}$ at time $T$. In doing so, the survival probability that the provider is exposed
to is certain, and corresponds to some fixed value $K$. If the contract is priced in such a way that
there is no upfront cost at the inception, it must
hold that
\begin{align}
B(0,T)\,E^\mathbb{Q}_0\left(e^{-\int^{T}_0\mu_x(v)\,dv}-K(T)\right)=0
\end{align}
under the risk-adjusted measure $\mathbb{Q}$. Thus, the fixed amount can be identified to be the risk-adjusted survival probability, that is,
\begin{align}\label{eq:KT}
K(T)=E^\mathbb{Q}_0\left(e^{-\int^{T}_0\mu_x(v)\,dv}\right).
\end{align}
Assuming that there is a positive market price of longevity risk,
the longevity risk hedger who pays the fixed leg and receives the
floating leg bears the cost for entering an S-forward.\footnote{The
risk-adjusted survival probability will be larger than the ``best
estimate" $\mathbb{P}$-survival probability if a positive market
price of longevity risk is demanded, see Figure~\ref{fig:RiskAdjusted}.}
Following terminology in \cite{BiffisBlPiAr14}, the amount
$K(T)=\tilde{S}_x(0,T)$ can be referred to as the swap rate of an
S-forward with maturity $T$. In general, the mark-to-market price
process $F(t)$ of an S-forward with fixed leg $K$ (not necessarily
$K(T)$ as in Eq.~\eqref{eq:KT}) is given by
\begin{align}
F(t) &= B(t,T) E^\mathbb{Q}_t\left(e^{-\int^{T}_0\mu_x(v)\,dv}-K\right)\nonumber\\
&=B(t,T)E^\mathbb{Q}_t\left(e^{-\int^{t}_0\mu_x(v)\,dv}e^{-\int^{T}_t\mu_x(v)\,dv}-K\right)\nonumber\\
&= B(t,T)\left(\bar{S}_x(0,t)\,\tilde{S}_{x+t}(t,T)-K\right) \label{eq:F_t}
\end{align}
for $t\in [0,T]$. The quantity
\begin{equation}\label{eq:survivorindex}
\bar{S}_x(0,t)=e^{-\int^{t}_0\mu_x(v)\,dv}\vert_{\mathcal{F}_t}
\end{equation}
is the realised survival probability, or the survivor index for the
cohort, which is observable given $\mathcal{F}_t$.

The term $\bar{S}_x(0,t)\,\tilde{S}_{x+t}(t,T)$ that appears in
Eq.~\eqref{eq:F_t} has a natural interpretation. Given information
$\mathcal{F}_0$ at time $t=0$, this term becomes
$\tilde{S}_{x}(0,T)$, which is the risk-adjusted survival
probability. As time moves on and more information $\mathcal{F}_t$,
with $t\in(0,T)$, is revealed, the term
$\bar{S}_x(0,t)\,\tilde{S}_{x+t}(t,T)$ is a product of the realised
survival probability of the first $t$ years, and the risk-adjusted
survival probability in the next $(T-t)$ years. At maturity $T$,
this product becomes the realised survival probability up to time
$T$. In order words, one can think of
$\bar{S}_x(0,t)\,\tilde{S}_{x+t}(t,T)$ as the $T$-year risk-adjusted
survival probability with information known up to time $t$.

The price process $F(t)$ in Eq.~\eqref{eq:F_t} depends on the swap
rate $\tilde{S}_{x+t}(t,T)$ of an S-forward written on the same
cohort that is now aged $(x+t)$ at time $t$, with time to maturity
$(T-t)$. If a liquid longevity market was developed, the swap rate
$\tilde{S}_{x+t}(t,T)$ could be obtained from market data. As
$\bar{S}_x(0,t)$ is observable at time $t$, the mark-to-market price
process of an S-forward could be considered model-independent.
However, since a longevity market is still in its development
stage, market swap rates are not available and a model-based
risk-adjusted survival probability $\tilde{S}_{x+t}(t,T)$ has to be
used instead. An analytical formula for the mark-to-market price of
an S-forward can be obtained if the risk-adjusted survival probability
is expressed in a closed-form, which can be performed, for example, under the two-factor
Gaussian mortality model.

Since a longevity swap is constructed as a portfolio of S-forwards,
the price of a longevity swap is simply the sum of the individual
S-forward prices.

\subsection{Longevity Caps}\label{subsec:caps}

A longevity cap, which is a portfolio of longevity caplets, provides
a similar hedge to a longevity swap but is an option-type instrument.
Consider again a scenario described in Section~\ref{subsec:swaps}
where an annuity provider aims to hedge against
larger-than-expected $T$-year survival probability of a particular
cohort. Alternatively to hedging with an S-forward, the provider can
enter into a long position of a longevity caplet with payoff at time
$T$ corresponding to
\begin{equation}\label{eq:cappayoff}
\max{\left\{\left(e^{-\int^{T}_0\mu_x(v)\,dv}-K\right),0\right\}}
\end{equation}
where $K\in(0,1)$ is the strike price.\footnote{The payoff of a
longevity caplet is similar to the payoff of the option embedded in
the principal-at-risk bond described in \cite{BiffisBl14}.} If the
realised survival probability is larger than $K$, the hedger
receives an amount
$\left(\exp{\{-\int^{T}_0\mu_x(v)\,dv\}}-K\right)$ from the
longevity caplet. This payment can be regarded as a compensation for
the increased payments that the provider has to make in the annuity
portfolio, due to the larger-than-expected survival probability.
There is no cash outflow if the realised survival probability is
smaller than or equal to $K$. In other words, the longevity caplet
allows the provider to ``cap" its longevity exposure at $K$ with no
downside risk. Since a longevity caplet has a non-negative payoff,
it comes at a cost. The price of a longevity caplet
 \begin{equation}\label{eq:caplet_expectedvalue}
C\ell(t; T,K)=B(t,T)E^\mathbb{Q}_t\left(\left(e^{-\int^{T}_0\mu_x(v)\,dv}-K\right)^+\right)
\end{equation}
under the two-factor Gaussian mortality model is obtained in the following Proposition.
\begin{proposition}\label{prop:caplet}
Under the two-factor Gaussian mortality model (Eq.~\eqref{eq:mi}-Eq.~\eqref{eq:dY2}) the price at time $t$ of a longevity caplet $C\ell(t;T,K)$, with maturity $T$ and strike $K$, is given by
\begin{equation}\label{eq:Cl_t}
C\ell(t; T,K)=\bar{S}_t\,\tilde{S}_t\,B(t,T)\,\Phi\left(\sqrt{\tilde{\Gamma}(t,T)}-d\right)-
KB(t,T)\Phi\left(-d\right)
\end{equation}
where $\bar{S}_t=\bar{S}_{x}(0,t)$ is the realised survival probability observable at time $t$, $\tilde{S}_t=\tilde{S}_{x+t}(t,T)$ is the risk-adjusted survival probability in the next $(T-t)$ years, $d=\frac{1}{\sqrt{\tilde{\Gamma}(t,T)}}\left(\ln{\{K/(\bar{S}_t\tilde{S}_t)\}}+\frac{1}{2}\tilde{\Gamma}(t,T)\right)$ and $\Phi(\cdot)$ denotes the cumulative distribution function of a standard Gaussian random variable.
\end{proposition}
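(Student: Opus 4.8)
The plan is to reduce the caplet price in Eq.~\eqref{eq:caplet_expectedvalue} to a standard expectation of the positive part of a lognormal random variable, i.e.\ a Black--Scholes type computation. First I would split the exponent over $[0,T]$ at the current time $t$, writing $\int_0^T\mu_x(v)\,dv=\int_0^t\mu_x(v)\,dv+\int_t^T\mu_x(v)\,dv$. The first piece is $\mathcal{F}_t$-measurable and equals $-\ln\bar S_t$ by Eq.~\eqref{eq:survivorindex}, so it factors out of the conditional expectation: setting $X:=\int_t^T\mu_x(v)\,dv$, I obtain $C\ell(t;T,K)=B(t,T)\,E^\mathbb{Q}_t\big[(\bar S_t\,e^{-X}-K)^+\big]$.

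Next I would invoke Proposition~\ref{prop:sp}, applied under the risk-adjusted measure $\mathbb{Q}$ (with $\alpha_2$ replaced as described below Eq.~\eqref{eq:ra_sp}), which gives that, conditional on $\mathcal{F}_t$, the variable $X$ is Gaussian with mean $\tilde\Theta(t,T)$ and variance $\tilde\Gamma(t,T)$. Consequently $Z:=\ln\bar S_t-X$ is, conditional on $\mathcal{F}_t$, Gaussian with mean $m:=\ln\bar S_t-\tilde\Theta(t,T)$ and variance $s^2:=\tilde\Gamma(t,T)$, and the price becomes $B(t,T)\,E^\mathbb{Q}_t\big[(e^{Z}-K)^+\big]$. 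I would also record the identity $\tilde S_t=E^\mathbb{Q}_t(e^{-X})=e^{\frac12\tilde\Gamma(t,T)-\tilde\Theta(t,T)}$ from Eq.~\eqref{eq:ra_sp}, to be used when simplifying the constants.

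The core step is the Gaussian integral $E\big[(e^{Z}-K)^+\big]$ for $Z\sim N(m,s^2)$: integrating against the normal density over $\{Z>\ln K\}$ and completing the square in the exponent of the first term yields $e^{m+s^2/2}\,\Phi\big(\tfrac{m+s^2-\ln K}{s}\big)-K\,\Phi\big(\tfrac{m-\ln K}{s}\big)$. Substituting the values of $m$ and $s$ and using $e^{m+s^2/2}=\bar S_t\,e^{\frac12\tilde\Gamma-\tilde\Theta}=\bar S_t\tilde S_t$ turns the prefactor of the first term into $\bar S_t\tilde S_t\,B(t,T)$, as required in Eq.~\eqref{eq:Cl_t}.

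The main obstacle is not analytic but bookkeeping: I must check that the two arguments of $\Phi$ collapse onto the single quantity $d$ stated in the proposition. Expanding $d=\tilde\Gamma^{-1/2}\big(\ln\{K/(\bar S_t\tilde S_t)\}+\tfrac12\tilde\Gamma\big)$ with $\ln\tilde S_t=\tfrac12\tilde\Gamma-\tilde\Theta$ gives $d=(\ln K-m)/s$, whence $\tfrac{m-\ln K}{s}=-d$ and $\tfrac{m+s^2-\ln K}{s}=-d+s=\sqrt{\tilde\Gamma(t,T)}-d$, matching Eq.~\eqref{eq:Cl_t} term by term. A minor point to handle carefully is measurability and conditioning: $\bar S_t$ is constant under $E^\mathbb{Q}_t$, and the Markovian Gaussian structure of the model ensures the conditional law of $X$ given $\mathcal{F}_t$ is precisely the one furnished by Proposition~\ref{prop:sp}, so no additional distributional justification is needed.
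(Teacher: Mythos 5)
Your proposal is correct and follows essentially the same route as the paper's proof: factor out the $\mathcal{F}_t$-measurable piece $\bar{S}_t$, use the Gaussian law of $\int_t^T\mu_x(v)\,dv$ under $\mathbb{Q}$ from Proposition~\ref{prop:sp}, and complete the square to obtain the two $\Phi$ terms, identifying $\tilde{S}_t=e^{\frac12\tilde{\Gamma}-\tilde{\Theta}}$ at the end. The only difference is notational (you absorb $\ln\bar{S}_t$ into the Gaussian exponent rather than carrying $\bar{S}_t$ as a prefactor), and your bookkeeping of the arguments $-d$ and $\sqrt{\tilde{\Gamma}(t,T)}-d$ checks out.
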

\begin{proof}
Under the risk-adjusted measure $\mathbb{Q}$, we have, from Proposition~\eqref{prop:sp}, that
\begin{equation}
L \stackrel{\text{def}}{=} -\int^{T}_t\mu_x(v)dv \sim N(-\tilde{\Theta}(t,T),\tilde{\Gamma}(t,T)).
\end{equation}
Using the simplified notation $\tilde{\Theta}=\tilde{\Theta}(t,T)$, $\tilde{\Gamma}=\tilde{\Gamma}(t,T)$ we can write
\begin{align*}
C\ell(t; T,K) &= B(t,T)E^\mathbb{Q}_t\left((\bar{S}_t\,e^L-K)^+\right) \\
&= B(t,T)\int^\infty_{-\infty} \frac{1}{\sqrt{2\pi\tilde{\Gamma}}}e^{-\frac{1}{2}\left(\frac{\ell+\tilde{\Theta}}{\sqrt{\tilde{\Gamma}}}\right)^2}\left(\bar{S}_t\,e^\ell-K\right)^+\,d\ell \\
&= B(t,T)\int^\infty_{\frac{\ln{K/\bar{S}_t}+\tilde{\Theta}}{\sqrt{\tilde{\Gamma}}}}\frac{1}{\sqrt{2\pi}}e^{-\frac{1}{2}\ell^2}
\left(\bar{S}_t\,e^{\ell\sqrt{\tilde{\Gamma}} -\tilde{\Theta}}-K\right)\,d\ell \\
&= B(t,T)\left(\bar{S}_t\,e^{\frac{1}{2}\tilde{\Gamma}-\tilde{\Theta}}\int^\infty_{\frac{\ln{K/\bar{S}_t}+\tilde{\Theta}}
{\sqrt{\tilde{\Gamma}}}} \frac{1}{\sqrt{2\pi}}e^{-\frac{1}{2}\left(\ell-\sqrt{\tilde{\Gamma}}\right)^2}\,d\ell-
K\int^\infty_{\frac{\ln{K/\bar{S}_t}+\Theta}{\sqrt{\tilde{\Gamma}}}} \frac{1}{\sqrt{2\pi}}e^{-\frac{1}{2}\ell^2}\,d\ell\right).
\end{align*}
Equation~\eqref{eq:Cl_t} follows using properties of $\Phi(\cdot)$ and noticing that $\tilde{S}_t=e^{\frac{1}{2}\tilde{\Gamma} - \tilde{\Theta}}$, that is, $\tilde{\Theta}=\frac{1}{2}\tilde{\Gamma}-\ln{\tilde{S}_t}$.
\end{proof}

Similar to an S-forward, the price of a longevity caplet depends on
the product term $\bar{S}_x(0,t)\,\tilde{S}_{x+t}(t,T)$. In
particular, a longevity caplet is said to be out-of-the-money if
$K>\bar{S}_x(0,t)\,\tilde{S}_{x+t}(t,T)$; at-the-money if
$K=\bar{S}_x(0,t)\,\tilde{S}_{x+t}(t,T)$; and in-the-money if
$K<\bar{S}_x(0,t)\,\tilde{S}_{x+t}(t,T)$. Eq.~\eqref{eq:Cl_t}, is verified using Monte Carlo simulation
summarised in Table~\ref{table:compare_capletprice}, where we set
$r=4\%$, $\lambda=8.5$ and $t=0$. Other parameters are as specified
in Table~\ref{table:modelparas}.

\begin{table}[!ht]
\center \setlength{\tabcolsep}{1em}
\renewcommand{\arraystretch}{1.1}
\small{\caption{Pricing longevity caplet $C\ell (0;T,K)$ by the
formula (Eq.~\eqref{eq:Cl_t}) and by Monte Carlo simulation of
Eq.~\eqref{eq:caplet_expectedvalue}; [ , ] denotes the 95\%
confidence interval.}\label{table:compare_capletprice}
\begin{tabular}{l|ll}
\hline
\hline
(T, K)  & Exact  & { }{ }{ }{ }{ }{ }M.C. Simulation  \\
\hline
(10, 0.6) & 0.15632   & 0.15644 [0.15631, 0.15656]      \\
(10, 0.7) & 0.08929   & 0.08941 [0.08928, 0.08954]      \\
(10, 0.8) & 0.02261   & 0.02262 [0.02250, 0.02275]      \\
(20, 0.3) & 0.08373   & 0.08388 [0.08371, 0.08406]      \\
(20, 0.4) & 0.03890   & 0.03897 [0.03879, 0.03914]      \\
(20, 0.5) & 0.00525   & 0.00530 [0.00522, 0.00539]      \\
\hline
\hline
\end{tabular}}
\end{table}

Following the result of Proposition~\ref{prop:caplet}, the
two-factor Gaussian mortality model leads to the price of a
longevity caplet that is a function of the following variables:
\begin{itemize}
  \item realised survival probability $\bar{S}_x(0,t)$ of the first $t$ years;
  \item risk-adjusted survival probability $\tilde{S}_{x+t}(t,T)$ in the next $T-t$ years;
  \item interest rate $r$;
  \item strike price $K$;
  \item time to maturity $(T-t)$; and
  \item standard deviation $\sqrt{\tilde{\Gamma}(t,T)}$, which is a function of the time to maturity and the model parameters.
\end{itemize}

Since the quantity $\exp\left\{-\int^{T}_0\mu_x(v)\,dv\right\}$ is
log-normally distributed under the two-factor Gaussian mortality
model, Eq.~\eqref{eq:Cl_t} resembles the Black-Scholes formula for
option pricing where the underlying stock price follows a geometric
Brownian motion. In our setup, the stock price at time $t$ is
replaced by the $T$-year risk-adjusted survival probability
$\bar{S}_x(0,t)\,S_{x+t}(t,T)$ with information available up to time
$t$. While the stock is traded and can be modelled directly using
market data, the underlying of a longevity caplet is the survival
probability which is not tradable but can be determined as an output from the
dynamics of mortality intensity. As a result, the role of the stock
price volatility in the Black-Scholes formula is played by the
standard deviation of the integral of the mortality intensity
$\int^T_t \mu_x(v)\,dv$. Since the integral $\int^T_t \mu_x(v)\,dv$
captures the whole history of the mortality intensity $\mu_x(t)$
from $t$ to $T$ under $\mathbb{Q}$, one can interpret the standard
deviation $\sqrt{\tilde{\Gamma}(t,T)}$ as the volatility of the
risk-adjusted aggregated longevity risk of a cohort aged $x+t$ at
time $t$, for the period from $t$ to $T$.

\begin{figure}[h]
  \begin{center}
   \includegraphics[width=8cm, height=6cm]{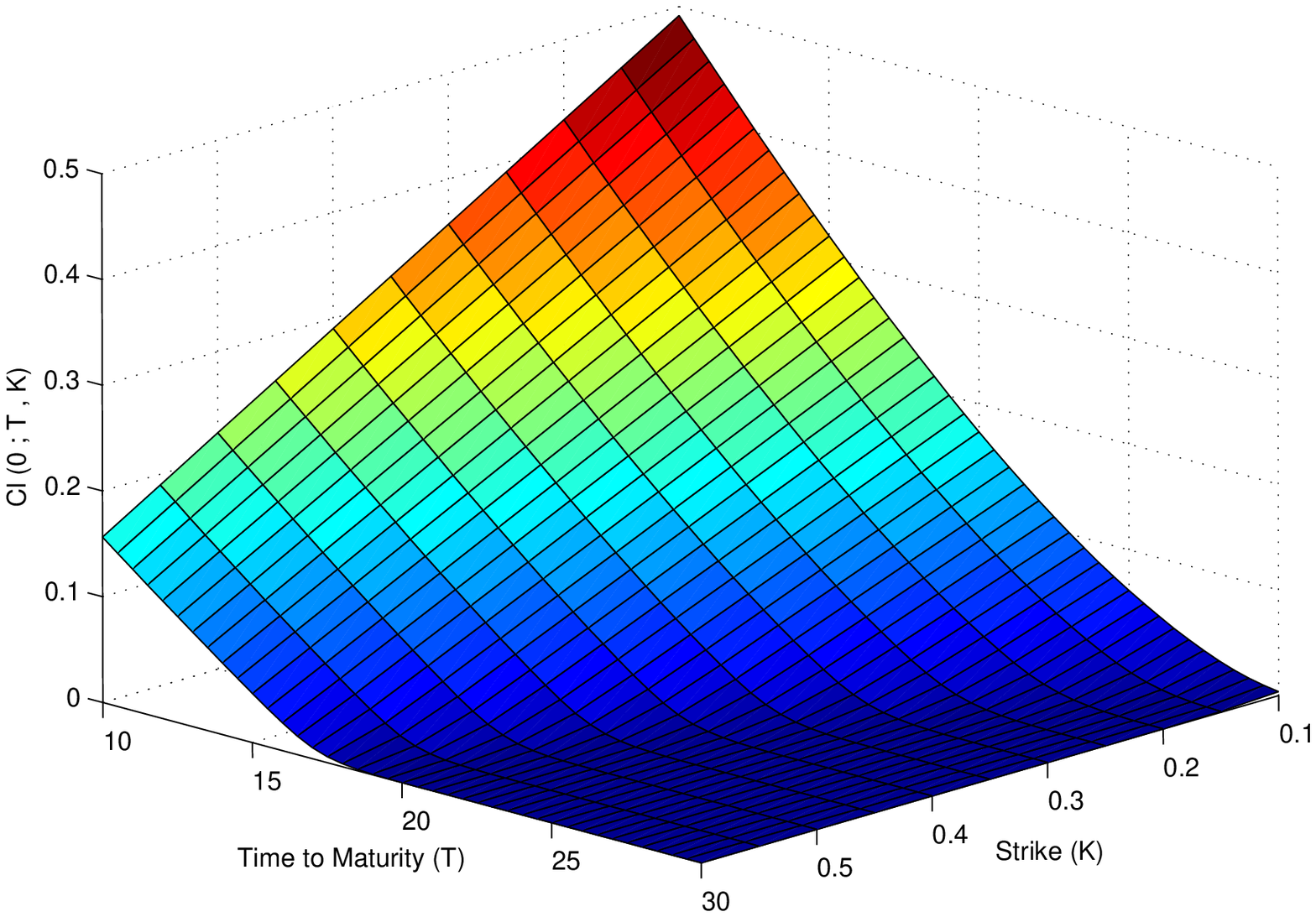}\includegraphics[width=8cm, height=6cm]{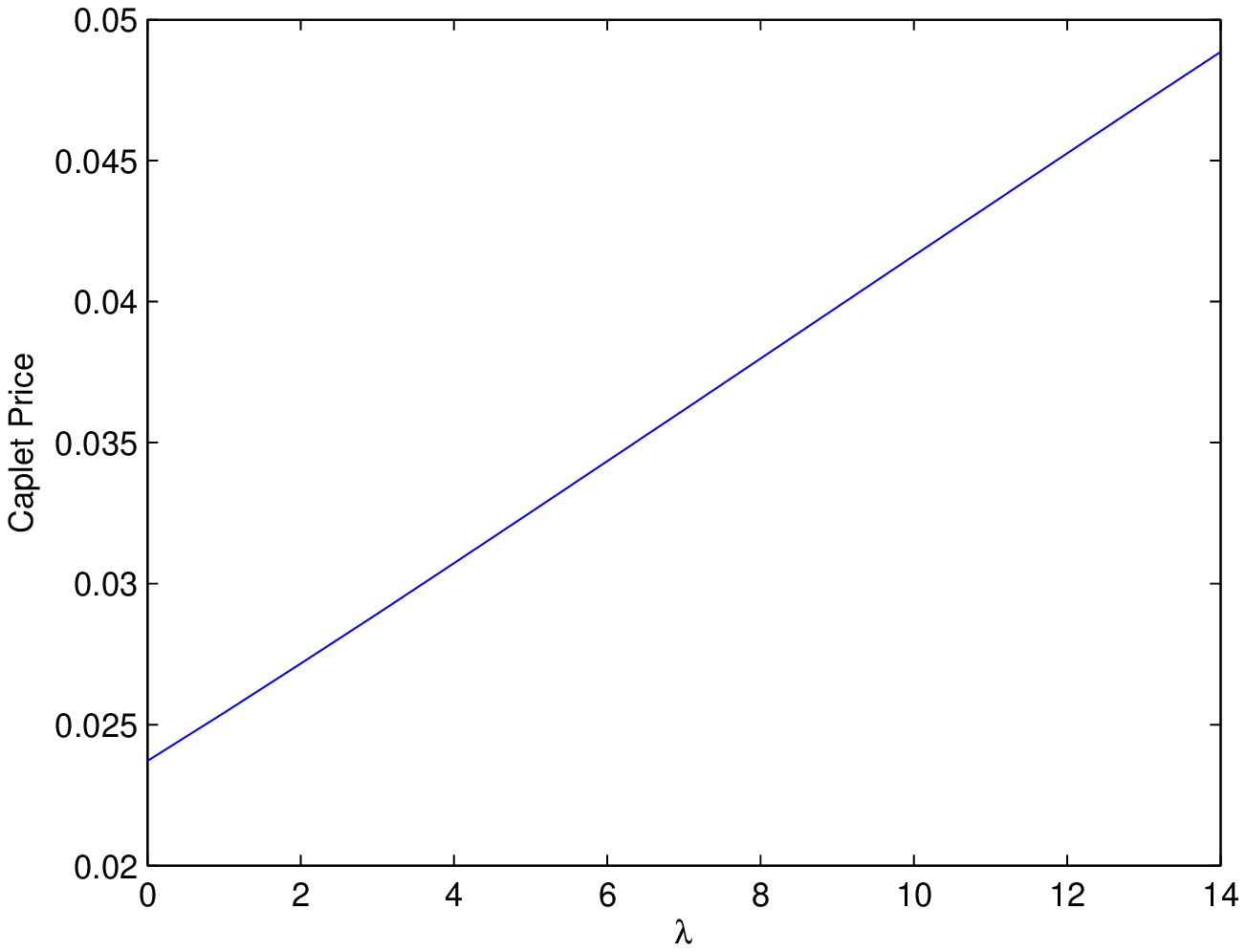}
  \caption{Caplet price as a function of (left panel) $T$ and $K$ and (right panel) $\lambda$ where $K=0.4$ and $T=20$.}\label{fig:caplet_price_plot}
    \end{center}
\end{figure}

The left panel of Figure~\ref{fig:caplet_price_plot} shows caplet
prices for a cohort aged $x=65$, using parameters as specified in
Table~\ref{table:modelparas}, as a function of time to maturity $T$
and strike $K$. We set $r=0.04$, $\lambda=8.5$ and $t=0$ such that
$\bar{S}_x(0,0) = 1$. A lower strike price indicates that the buyer
of a caplet is willing to pay more to secure a better protection
against a larger-than-expected survival probability. On the other
hand, when the time to maturity $T$ is increasing, the underlying
survival probability is likely to take smaller values, which leads to a higher
probability for the caplet to become out-of-the-money at maturity
for a fixed $K$, see Eq.~\eqref{eq:cappayoff}. Consequently, for a
fixed $K$ the caplet price decreases with increasing
$T$.

The right panel of Figure~\ref{fig:caplet_price_plot} illustrates
the effect of the market price of longevity risk $\lambda$ on the
caplet price. The price of a caplet increases with increasing $\lambda$.
As shown in Figure~\ref{fig:RiskAdjusted}, a larger value of
$\lambda$ will lead to an improvement in survival probability under
$\mathbb{Q}$. Thus, a higher caplet price is observed since the
underlying survival probability is larger (on average) under
$\mathbb{Q}$ when $\lambda$ increases, see
Eq.~\eqref{eq:caplet_expectedvalue}.

Since longevity cap is constructed as a portfolio of longevity
caplets, it can be priced as a sum of individual caplet prices,
see also Section~\ref{subsubsec:caphedged_portfolio}.

\section{Managing Longevity Risk in a Hypothetical Life Annuity Portfolio}
\label{sec:hypotheticalexample}

Hedging features of a longevity swap and cap are examined for a
hypothetical life annuity portfolio subject to longevity risk.
Factors considered include the market price of longevity risk, the term
to maturity of hedging instruments and the size of the
underlying annuity portfolio.

\subsection{Setup}\label{sec:setup}

We consider a hypothetical life annuity portfolio that consists of a
cohort aged $x=65$. The size of the portfolio that corresponds to the 
number of policyholders, is denoted by $n$.  The underlying mortality intensity for the
cohort follows the two-factor Gaussian mortality model described in
Section~\ref{sec:model}, and the model parameters are specified in
Table~\ref{table:modelparas}. We assume that there is no loading for
the annuity policy and expenses are not included.

Further, we assume a single premium, whole life annuity of $\$1$ per year
payable in arrears conditional on the survival of the annuitant to
the payment dates. The fair value, or the premium, of the annuity
evaluated at $t=0$ is given by
\begin{equation}\label{eq:annuityprice}
a_{x} = \sum^{\omega-x}_{T=1} B(0,T) \, \tilde{S}_{x}(0,T)
\end{equation}
where $r=4\%$ and $\omega=110$ is the maximum age allowed in the
mortality model. The life annuity provider, thus, receives a total
premium, denoted by $A$, for the whole portfolio corresponding to
the sum of individual premiums:
\begin{equation}
A = n\,a_x.
\end{equation}
This is the present value of the asset held by the annuity provider
at $t=0$. Since the promised annuity cashflows depend on the death
times of annuitants in the portfolio, the present value of the
liability is subject to randomness caused by the stochastic dynamics
of the mortality intensity. The present value of the liability for
each policyholder, denoted by $L_k$, is determined by the death time
$\tau_k$ of the policyholder, and is given by
\begin{equation}
L_k = \sum^{\lfloor \tau_k \rfloor}_{T=1} B(0,T)
\end{equation}
for a simulated $\tau_k$, with $\lfloor q \rfloor$ denoting the next
smaller integer of a real number $q$. The present value of the
liability $L$ for the whole portfolio is obtained as a sum of
individual liabilities:
\begin{equation}
L = \sum^{n}_{k=1}L_k.
\end{equation}
The algorithm for simulating death times of annuitants, which
requires a single simulated path for the mortality intensity of the
cohort, is summarised in Appendix~\ref{A2}. The discounted surplus
distribution ($D_{\text{no}}$) of an unhedged annuity portfolio is
obtained by setting
\begin{equation}\label{disc_S_sample}
D_{\text{no}} = A - L.
\end{equation}
The impact of longevity risk is captured by simulating the
discounted surplus distribution where each sample is determined by
the realised mortality intensity of a cohort. Since traditional
pricing and risk management of life annuity relies on 
diversification effect, or the law of large numbers, we consider the discounted surplus distribution per policy
\begin{equation}
D_{\text{no}}/n.
\end{equation}
Figure~\ref{fig:lawlargeno} shows the discounted surplus
distribution per policy without longevity risk (i.e. when setting $\sigma_1
= \sigma =0$) with different portfolio sizes, varying from $n=2000$
to $8000$. As expected, the mean of the distribution is centred
around zero as there is no loading assumed in the pricing algorithm,
while the standard deviation diminishes as the number of policies
increases.

In the following we consider a longevity swap and a cap as hedging
instruments. These are index-based instruments where the payoffs
depend on the survivor index, or the realised survival probability
(Eq.~\eqref{eq:survivorindex}), which is in turn
determined by the realised mortality intensity. We do not consider basis risk\footnote{If basis risk is present, we need to distinguish between the mortality intensity for the population ($\mu_x^I$) and mortality intensity for the cohort ($\mu_x$) underlying the annuity portfolio, see \cite{BiffisBlPiAr14}.} but due to a finite portfolio size, the actual proportion of survivors, $\frac{n-N_t}{n}$, where $N_t$ denotes the number of deaths experienced by a cohort during the period $[0,t]$, will be in general similar, but not identical, to the survivor index (Appendix~\ref{A2}). As a result, the static hedge will be able to reduce systematic mortality risk, whereas the idiosyncratic mortality risk component will be retained by the annuity provider.

\begin{figure}[h]
  \begin{center}
   \includegraphics[width=12cm, height=9cm]{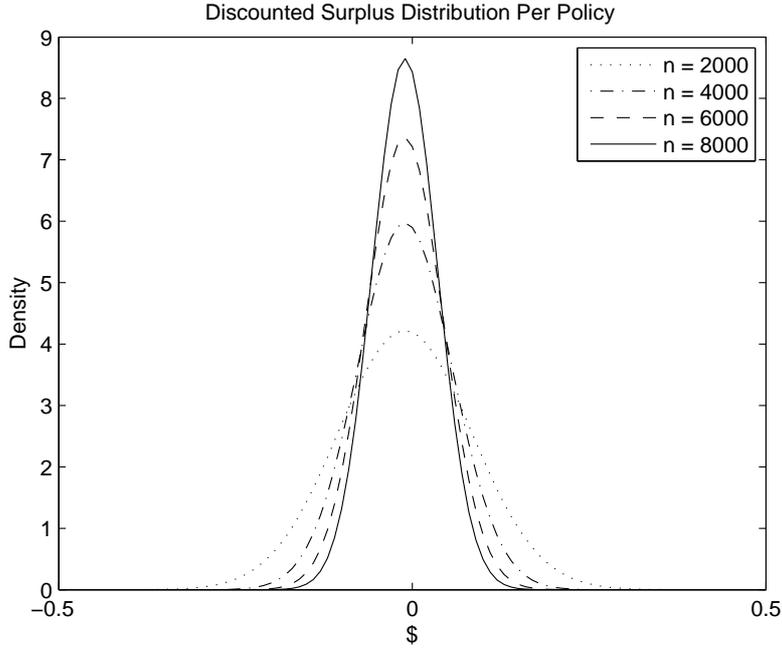}
  \caption{Discounted surplus distribution per policy without longevity risk with different portfolio size ($n$).}\label{fig:lawlargeno}
    \end{center}
\end{figure}

\subsubsection{A Swap-Hedged Annuity Portfolio}
For an annuity portfolio hedged by an index-based longevity swap, payments from the swap
\begin{equation}\label{eq:swappayoff}
n\, \left( e^{-\int^{T}_0\mu_x(v)\,dv}-K(T) \right)
\end{equation}
at time $T \in\{1,...,\hat{T}\}$ depend on the realised mortality
intensity, where $\hat{T}$ denotes the term to maturity of the longevity
swap. The number of policyholders $n$ acts as the notional amount of
the swap contract so that the quantity $n
\exp\{-\int^{T}_0\mu_x(v)\,dv\}$ represents the number of survivors
implied by the realised mortality intensity at time $T$. We fix the
strike of a swap to the risk-adjusted survival probability, that is,
\begin{equation}
K(T)= \tilde{S}_x(0,T)=E^\mathbb{Q}_0\left(e^{-\int^{T}_0\mu_x(v)\,dv}\right)
\end{equation}
such that the price of a swap is zero at $t=0$, see Section~\ref{subsec:swaps}.
The discounted surplus distribution of a swap-hedged annuity portfolio can be expressed as
\begin{equation}
D_{\text{swap}} = A - L + F_{\text{swap}}
\end{equation}
where
\begin{equation}\label{eq:Fswap}
F_{\text{swap}} = n\,\sum^{\hat{T}}_{T=1} B(0,T)\,  \left( e^{-\int^{T}_0\mu_x(v)\,dv}-\tilde{S}_x(0,T) \right)
\end{equation}
is the (random) discounted cashflow coming from a long position in
the longevity swap. The discounted surplus distribution per policy
of a swap-hedged annuity portfolio is determined by
$D_{\text{swap}}/n$.

\subsubsection{A Cap-Hedged Annuity Portfolio}\label{subsubsec:caphedged_portfolio}
For an annuity portfolio hedged by an index-based longevity cap, the cashflows
\begin{equation}
n\,\max{\left\{\left(e^{-\int^{T}_0\mu_x(v)\,dv}-K(T)\right),0\right\}}
\end{equation}
at $T \in\{1,...,\hat{T}\}$ are payments from a long position in the longevity cap. We set
\begin{equation}
K(T) = S_x(0,T)=E^\mathbb{P}_0\left(e^{-\int^{T}_0\mu_x(v)\,dv}\right)
\end{equation}
such that the strike for a longevity caplet is the ``best estimated"
survival probability given $\mathcal{F}_0$.\footnote{For a longevity
swap, the risk-adjusted survival probability is used as a strike price so
that the price of a longevity swap is zero at inception. In
contrast, a longevity cap has non-zero price and $S_x(0,T)$ is the
most natural choice for a strike.}  The discounted surplus
distribution of a cap-hedged annuity portfolio is given by
\begin{equation}
D_{\text{cap}} = A - L + F_{\text{cap}} - C_{\text{cap}}
\end{equation}
where
\begin{equation}\label{eq:Fcap}
F_{\text{cap}} = n\,\sum^{\hat{T}}_{T=1} B(0,T)\, \max{\left\{\left(e^{-\int^{T}_0\mu_x(v)\,dv}-S_x(0,T)\right),0\right\}}
\end{equation}
is the (random) discounted cashflow from holding the longevity cap
and
\begin{equation}
C_{\text{cap}} = n\, \sum^{\hat{T}}_{T=1}C\ell\left(0; T,S_x(0,T)\right)
\end{equation}
is the price of the longevity cap. The discounted surplus distribution per policy of a cap-hedged annuity portfolio is given by $D_{\text{cap}}/n$.

\subsection{Results}

Hedging results are summarised by means of summary statistics that
include mean, standard deviation (std. dev.), skewness, as well as Value-at-Risk (VaR) and Expected Shortfall (ES) of the discounted surplus
distribution per policy of an unhedged, a swap-hedged and
a cap-hedged annuity portfolio. Skewness is included since the
payoff of a longevity cap is nonlinear and the resulting
distribution of a cap-hedged annuity portfolio is not symmetric. VaR
is defined as the $q$-quantile of the discounted surplus
distribution per policy. ES is defined as the expected loss of the
discounted surplus distribution per policy given the loss is at or
below the $q$-quantile. We fix $q = 0.01$ so that the confidence
interval for VaR and ES corresponds to $99\%$. We use 5,000
simulations to obtain the distribution for the discounted surplus.
Hedge effectiveness is examined with respect to (w.r.t.) different
assumptions underlying the market price of longevity risk
($\lambda$), the term to maturity of hedging instruments ($\hat{T}$)
and the portfolio size ($n$). Parameters for the base case are as
specified in Table~\ref{table:basecase}.

\begin{table}[!ht]
\center \setlength{\tabcolsep}{1em}
\renewcommand{\arraystretch}{1.1}
\small{\caption{Parameters for the base case.}\label{table:basecase}
\begin{tabular}{lll}
\hline
\hline
$\lambda$  & $\hat{T}$ (years) & $n$  \\
\hline
8.5 & { }{ }{ }{ }30   & 4000      \\
\hline
\hline
\end{tabular}}
\end{table}

\subsubsection{Hedging Features w.r.t. Market Price of Longevity Risk}
\label{subsubsec:hedge_pricerisk}

\begin{figure}[!ht]
  \begin{center}
   \includegraphics[width=15.5cm, height=12cm]{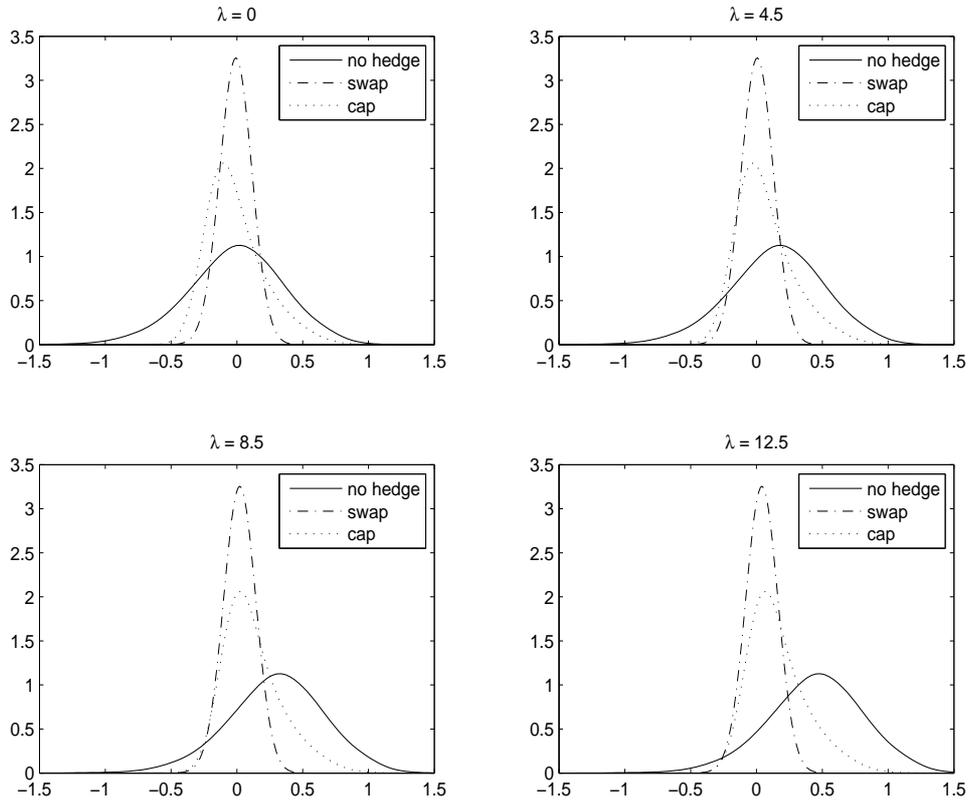}
  \caption{Effect of the market price of longevity risk $\lambda$ on the discounted surplus distribution per policy.}\label{fig:hedge_pricerisk}
    \end{center}
\end{figure}

\begin{table}[h]
\center\small{\caption{\label{tab:hedge_pricerisk}Hedging features of a longevity swap and cap w.r.t. market price of longevity risk $\lambda$.}
\begin{tabular*}{1.0\textwidth}%
     {@{\extracolsep{\fill}}l|rrrrr}
\hline \hline
  &  Mean & Std.dev. & Skewness &VaR$_{0.99}$ & ES$_{0.99}$ \\
\hline
&\multicolumn{5}{c}{$\lambda = 0$}\\
\hline
No hedge & -0.0076 &  0.3592 & -0.2804 & -0.9202 & -1.1027 \\
Swap-hedged & -0.0089 &  0.0718 & -0.1919 & -0.1840 & -0.2231 \\
Cap-hedged & -0.0086 &  0.2054 & 1.0855 &  -0.3193 & -0.3515 \\
\hline
&\multicolumn{5}{c}{$\lambda = 4.5$}\\
\hline
No hedge & 0.1520 &  0.3592 & -0.2804 & -0.7606 & -0.9431 \\
Swap-hedged & 0.0048 &  0.0718 & -0.1919 & -0.1703 & -0.2094 \\
Cap-hedged & 0.0682 &  0.2054 & 1.0855 &  -0.2425 & -0.2746 \\
\hline
&\multicolumn{5}{c}{$\lambda = 8.5$}\\
\hline
No hedge & 0.2978 &  0.3592 & -0.2804 & -0.6148 & -0.7973 \\
Swap-hedged & 0.0204 &  0.0718 & -0.1919 & -0.1547 & -0.1938 \\
Cap-hedged & 0.1205 &  0.2054 & 1.0855 &  -0.1903 & -0.2224 \\
\hline
&\multicolumn{5}{c}{$\lambda = 12.5$}\\
\hline
No hedge & 0.4475 &  0.3592 & -0.2804 & -0.4650 & -0.6476 \\
Swap-hedged & 0.0398 &  0.0718 & -0.1919 & -0.1354 & -0.1744 \\
Cap-hedged & 0.1619 &  0.2054 & 1.0855 &  -0.1489 & -0.1810 \\
\hline \hline
\end{tabular*} }
\end{table}

The market price of longevity risk $\lambda$ is one of the factors
that determines prices of longevity derivatives and life annuity
policies. Since payoffs of a longevity swap, a cap and a life annuity
are contingent on the same underlying mortality intensity of a
cohort, all these products are priced using the same $\lambda$.
Figure~\ref{fig:hedge_pricerisk} and Table~\ref{tab:hedge_pricerisk}
illustrate the effect of changing $\lambda$ on the distributions of an
unhedged, a swap-hedged and a cap-hedged annuity portfolio. The
degree of longevity risk can be quantified by the standard
deviation, the VaR and the ES of the distributions. We observe that
increasing $\lambda$ leads to the shift of the distribution to the
right, resulting in a higher average surplus. On the other hand,
changing $\lambda$ has no impact on the standard deviation and
the skewness of the distribution.

For an unhedged annuity portfolio, a higher $\lambda$ leads to
higher premium for the life annuity policy since the annuity price
is determined by the risk-adjusted survival probability
$\tilde{S}_x(0,T)$, see Eq.~\eqref{eq:annuityprice}. In other words,
an increase in the annuity price compensates the provider for the
longevity risk undertaken when selling life annuity policies. There
is also a trade-off between risk premium and affordability. Setting
a higher premium will clearly improve the risk and return of an
annuity business, it might, however, reduce the interest of potential
policyholders. An empirical relationship between implied longevity
and annuity prices is studied in \cite{ChMiSa14}.

When life annuity portfolio is hedged using a longevity swap, the standard deviation and the absolute values of the VaR and the ES reduce substantially. The higher return obtained by charging a larger market price of longevity risk in life annuity policies is offset by an increased price paid implicitly in the swap contract (since $\tilde{S}_x(0,T) \geq S_x(0,T)$ in Eq.~\eqref{eq:Fswap}). It turns out that as $\lambda$ increases an extra return earned in the annuity portfolio and the higher implicit cost of the longevity swap nearly offset each other out on average. The net effect is that a swap-hedged annuity portfolio remains to a great extent unaffected by the assumption on $\lambda$, leading only to a very minor increase in the mean of the distribution.

For a cap-hedged annuity portfolio, the discounted surplus distribution is positively
skewed since a longevity cap allows an annuity provider to get
exposure to the upside potential when policyholders live shorter
than expected. Compared to an unhedged portfolio, the standard
deviation and the absolute values of the VaR and the ES are also
reduced but the reduction is smaller compared to a swap-hedged
portfolio. When $\lambda$ increases, we observe that the mean of the
distribution for a cap-hedged portfolio increases faster than for a
swap-hedged portfolio but slower than for an unhedged portfolio. It
can be explained by noticing that when the survival probability of a
cohort is overestimated, that is, when annuitants turn out to live
shorter than expected, holding a longevity cap has no effect
(besides paying the price of a cap for longevity protection at the
inception of the contract) while there is a cash outflow when
holding a longevity swap, see Eq.~\eqref{eq:Fswap} and
Eq.~\eqref{eq:Fcap}.

In the longevity risk literature, the VaR and the ES are of a
particular importance as they are the main factors determining the
capital reserve when dealing with exposure to longevity risk
(\cite{MeyrickeSh14}). As shown in Table~\ref{tab:hedge_pricerisk},
the difference between a swap-hedged and a cap-hedged portfolio in
terms of the VaR and the ES becomes smaller when $\lambda$
increases. In fact, for $\lambda \geq 17.5$, a longevity cap becomes
more effective in reducing the tail risk of an annuity portfolio
compared to a longevity swap.\footnote{Given $\lambda = 17.5$, the
VaR and the ES for a swap-hedged portfolio are $-0.1051$ and
$-0.1441$ respectively. For a cap-hedged portfolio they become
$-0.1038$ and $-0.1360$, respectively.} This result suggests that a
longevity cap, besides being able to capture the upside potential,
can be a more effective hedging instrument than a longevity swap in
terms of reducing the VaR and the ES when the demanded market price
of longevity risk $\lambda$ is large.

\subsubsection{Hedging Features w.r.t. Term to Maturity}

\begin{figure}[!ht]
  \begin{center}
   \includegraphics[width=15.5cm, height=12cm]{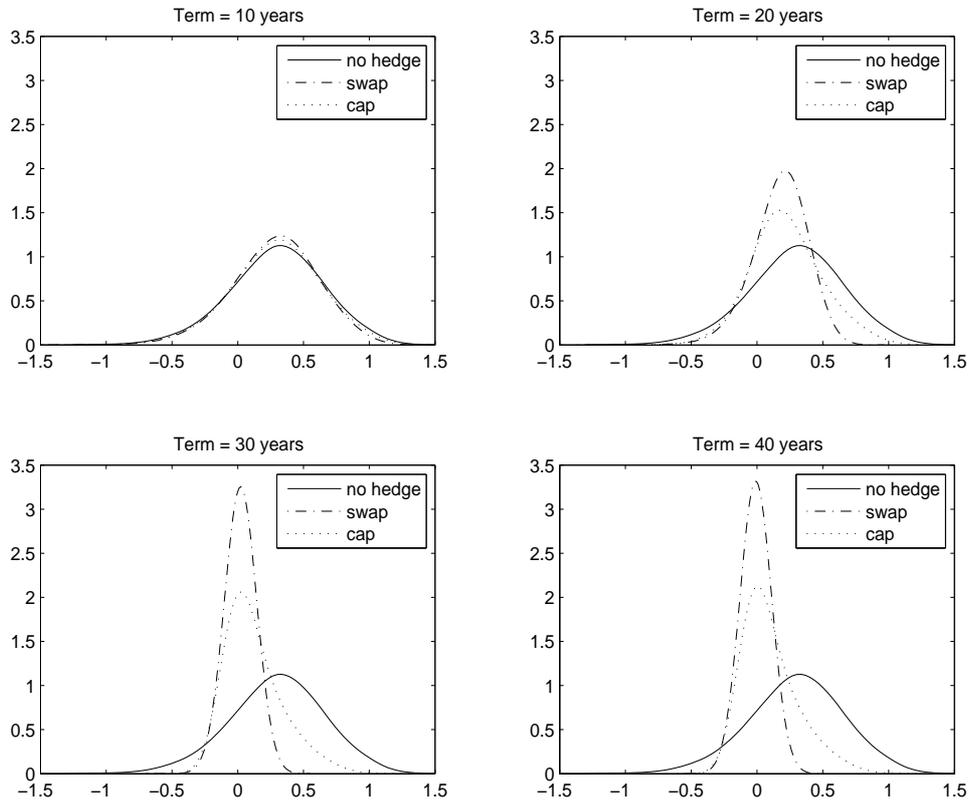}
  \caption{Effect of the term to maturity $\hat{T}$ of the hedging instruments on the discounted surplus distribution per policy.}\label{fig:hedge_term}
    \end{center}
\end{figure}

\begin{table}[h]
\center\small{\caption{\label{tab:hedge_term} Hedging features of a longevity swap and cap w.r.t. term to maturity $\hat{T}$.}
\begin{tabular*}{1.0\textwidth}%
     {@{\extracolsep{\fill}}l|rrrrr}
\hline \hline
  &  Mean & Std.dev. & Skewness &VaR$_{0.99}$ & ES$_{0.99}$ \\
\hline
&\multicolumn{5}{c}{$\hat{T} = 10$ Years}\\
\hline
No hedge & 0.2978 &  0.3592 & -0.2804 & -0.6148 & -0.7973 \\
Swap-hedged & 0.2820 &  0.2911 & -0.3871 & -0.5707 & -0.7490 \\
Cap-hedged & 0.2893 &  0.2989 & -0.2661 &  -0.5801 & -0.7592 \\
\hline
&\multicolumn{5}{c}{$\hat{T} = 20$ Years}\\
\hline
No hedge & 0.2978 &  0.3592 & -0.2804 & -0.6148 & -0.7973 \\
Swap-hedged & 0.1740 &  0.1794 & -0.7507 & -0.3656 & -0.5061 \\
Cap-hedged & 0.2234 &  0.2310 & 0.2006 &  -0.3870 & -0.5259 \\
\hline
&\multicolumn{5}{c}{$\hat{T} = 30$ Years}\\
\hline
No hedge & 0.2978 &  0.3592 & -0.2804 & -0.6148 & -0.7973 \\
Swap-hedged & 0.0204 &  0.0718 & -0.1919 & -0.1547 & -0.1938 \\
Cap-hedged & 0.1205 &  0.2054 & 1.0855 &  -0.1903 & -0.2224 \\
\hline
&\multicolumn{5}{c}{$\hat{T} = 40$ Years}\\
\hline
No hedge & 0.2978 &  0.3592 & -0.2804 & -0.6148 & -0.7973 \\
Swap-hedged & -0.0091 &  0.0668 & 0.0277 & -0.1616 & -0.1869 \\
Cap-hedged & 0.0984 &  0.1999 & 1.1527 &  -0.1909 & -0.2131 \\
\hline \hline
\end{tabular*} }
\end{table}

Table~\ref{tab:hedge_term} and Figure~\ref{fig:hedge_term} summarize
hedging results with respect to the term to maturity of hedging
instruments. Due to the long-term nature of the contracts, the hedges
are ineffective for $\hat{T} \leq 10$ years and the standard
deviations are reduced only by around $17-19\%$ for both
instruments. The lower left panel of Figure~\ref{fig:sim_mu} shows
that there is little randomness around the realised survival
probability for the first few years for a cohort aged 65, and
consequently the hedges are insignificant when $\hat{T}$ is short.

The difference in hedge effectiveness between $\hat{T}= 30$ and
$\hat{T} =40$ for both instruments is also insignificant. In fact,
the longevity risk underlying the annuity portfolio becomes small
after $30$ years since the majority of annuitants has already
deceased before reaching the age of 95. In our model setup the
chance for a 65 years old to live up to 95 is around $6\%$
(Figure~\ref{fig:RiskAdjusted} with $\lambda=0$) and, hence, only
around $4000 \times 6\% = 240$ policies will still be in-force after
30 years. Much of the risk left is attributed to idiosyncratic
mortality risk, and hedging longevity risk for a small portfolio
using index-based instruments is of limited use.

For a swap-hedged portfolio, the standard deviation is reduced significantly when $\hat{T} > 20$ years. The mean surplus, on the other hand, drops to nearly zero since there is a higher cost implied for the hedge with increasing number of S-forwards involved to form the swap as $\hat{T}$ increases.

Similar hedging features with respect to $\hat{T}$ are observed for
a longevity cap. However, the skewness of the distribution of a
cap-hedged portfolio increases with increasing $\hat{T}$. It can be
explained by noticing that while a longevity cap is able to capture
the upside potential regardless of $\hat{T}$, it provides a better
longevity risk protection when $\hat{T}$ is larger. As a result, the
distribution of a cap-hedged portfolio becomes more asymmetric when
$\hat{T}$ increases.

\subsubsection{Hedging Features w.r.t. Portfolio Size}\label{subsubsec:portfoliosize}

\begin{figure}[!ht]
  \begin{center}
   \includegraphics[width=15.5cm, height=12cm]{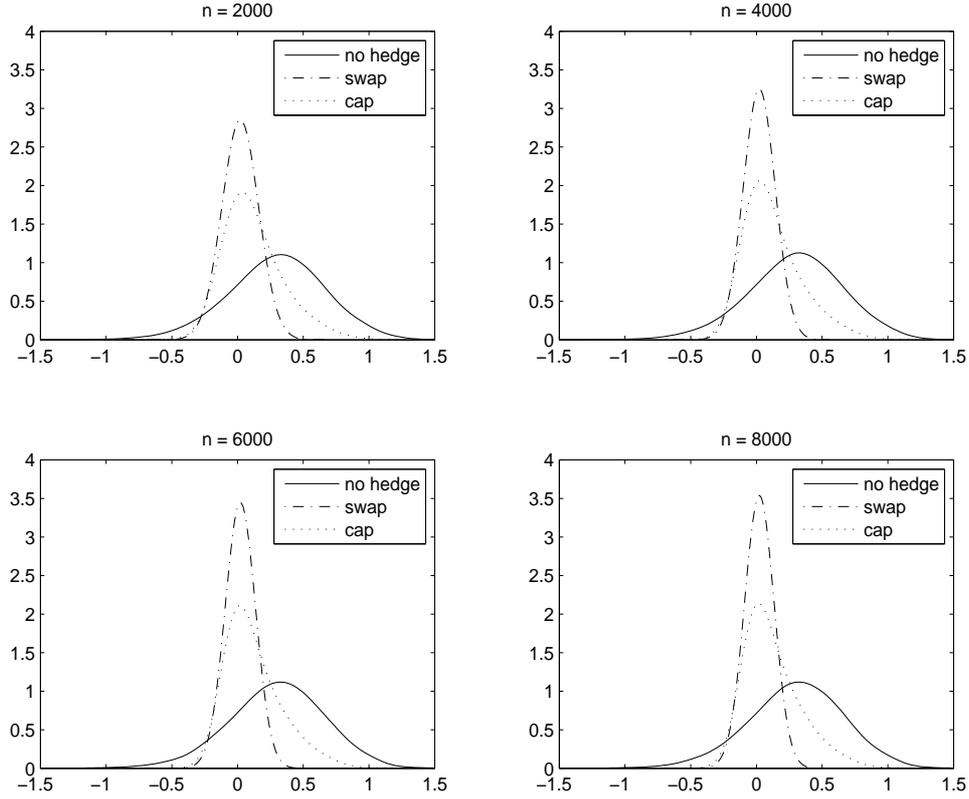}
  \caption{Effect of the portfolio size $n$ on the discounted surplus distribution per policy.}\label{fig:portfoliosize}
    \end{center}
\end{figure}

\begin{table}[h]
\center\small{\caption{\label{tab:hedge_portfoliosize} Hedging features of a longevity swap and cap w.r.t. different portfolio size ($n$).}
\begin{tabular*}{1.0\textwidth}%
     {@{\extracolsep{\fill}}l|rrrrr}
\hline \hline
  &  Mean & Std.dev. & Skewness &VaR$_{0.99}$ & ES$_{0.99}$ \\
\hline
&\multicolumn{5}{c}{$n = 2000$}\\
\hline
No hedge & 0.2973 &  0.3646 & -0.2662 & -0.6360 & -0.8107 \\
Swap-hedged & 0.0200 &  0.0990 & -0.1615 & -0.2120 & -0.2653 \\
Cap-hedged & 0.1200 &  0.2160 & 0.9220 &  -0.2432 & -0.2944 \\
\hline
&\multicolumn{5}{c}{$n = 4000$}\\
\hline
No hedge & 0.2978 &  0.3592 & -0.2804 & -0.6148 & -0.7973 \\
Swap-hedged & 0.0204 &  0.0718 & -0.1919 & -0.1547 & -0.1938 \\
Cap-hedged & 0.1205 &  0.2054 & 1.0855 &  -0.1903 & -0.2224 \\
\hline
&\multicolumn{5}{c}{$n = 6000$}\\
\hline
No hedge & 0.2977 &  0.3566 & -0.2786 & -0.6363 & -0.8001 \\
Swap-hedged & 0.0204 &  0.0594 & -0.3346 & -0.1259 & -0.1660 \\
Cap-hedged & 0.1204 &  0.2016 & 1.1519 &  -0.1639 & -0.2051 \\
\hline
&\multicolumn{5}{c}{$n = 8000$}\\
\hline
No hedge & 0.2982 &  0.3554 & -0.2920 & -0.6060 & -0.7876 \\
Swap-hedged & 0.0209 &  0.0536 & -0.5056 & -0.1190 & -0.1595 \\
Cap-hedged & 0.1209 &  0.1992 & 1.1616 &  -0.1598 & -0.1991 \\
\hline \hline
\end{tabular*} }
\end{table}

Table~\ref{tab:hedge_portfoliosize} and
Figure~\ref{fig:portfoliosize} demonstrate hedging features of a
longevity swap and a cap with changing portfolio size $n$. We
observe a decrease in standard deviation, as well as the VaR and the
ES (in absolute terms) when portfolio size increases. Compared to an
unhedged portfolio, the reduction in the standard deviation and the
risk measures is larger for a swap-hedged portfolio, compared to a
cap-hedged portfolio. Recall that idiosyncratic mortality risk
becomes significant when $n$ is small. We quantify the effect of the
portfolio size on hedge effectiveness by introducing the 
measure of longevity risk reduction $R$, defined in terms of the variance
of the discounted surplus per policy, that is,
\begin{equation}
R = 1- \frac{\text{Var}(\bar{D}^*)}{\text{Var}(\bar{D})},
\end{equation}
where $\text{Var}(\bar{D}^*)$ and $\text{Var}(\bar{D})$ represent
the variances of the discounted surplus distribution per policy for
a hedged and an unhedged annuity portfolio, respectively. The results
are reported in Table~\ref{tab:R}.

\begin{table}[h]
\center \setlength{\tabcolsep}{1em}
\renewcommand{\arraystretch}{1.1}
\center\small{\caption{\label{tab:R} Longevity risk reduction $R$ of a longevity swap and cap w.r.t. different portfolio size ($n$).}}
\begin{tabular}{lllll}
\hline \hline
$n$  &  2000 & 4000 & 6000 & 8000 \\
\hline
$R_\text{swap}$ & $92.6\%$ & $96.0\%$ & $97.2\%$ & $97.7\%$  \\
\hline
$R_\text{cap}$ & $64.9\%$ & $67.3\%$ & $68.0\%$ & $68.6\%$ \\
\hline \hline
\end{tabular}
\end{table}

\cite{LiHa11} consider hedging longevity risk using a portfolio of
q-forwards and find the longevity risk reduction of $77.6\%$ and
$69.6\%$ for portfolio size of 10,000 and 3,000, respectively. In
contrast to \cite{LiHa11}, we do not consider basis risk and the
result of using longevity swap as a hedging instrument leads to a
greater risk reduction. Overall, our results indicate that hedge
effectiveness for an index-based longevity swap and a cap diminishes
with decreasing $n$ since idiosyncratic mortality risk cannot be
effectively diversified away for a small portfolio size. Even though
a longevity cap is less effective in reducing the variance, part of
the dispersion is attributed to its ability of capturing the upside
of the distribution when survival probability of a cohort is
overestimated. From Table~\ref{tab:hedge_portfoliosize} we also 
observe that the distribution becomes more positively skewed for a cap-hedged portfolio when $n$ increases, which is
a consequence of having a larger exposure to longevity risk with
increasing number of policyholders in the portfolio.

\section{Conclusion}

Life and pension annuities are the most important types of
post-retirement products offered by annuity providers to help
securing lifelong incomes for the rising number of retirees. While
interest rate risk can be managed effectively in the financial
markets, longevity risk is a major concern for annuity providers as
there are only limited choices available to mitigate the long-term risk.
Development of effective financial instruments for longevity risk in
capital markets is arguably the best solution available.

Two types of longevity derivatives, a longevity swap and a cap, are
analysed in this paper from a pricing and hedging perspective. We
apply a tractable Gaussian mortality model to capture the longevity
risk, and derive explicit formulas for important quantities such as
survival probabilities and prices of longevity derivatives. Hedge
effectiveness and features of an index-based longevity swap and a
cap used as hedging instruments are examined using a hypothetical
life annuity portfolio exposed to longevity risk.

Our results suggest that the market price of longevity risk
$\lambda$ is a small contributor to hedge effectiveness of a
longevity swap since a higher annuity price is partially
offset by an increased cost of hedging when $\lambda$ is taken into
account. It is shown that a longevity cap, while being able to
capture the upside potential when survival probabilities are
overestimated, can be more effective in reducing longevity tail risk
compared to a longevity swap, provided that $\lambda$ is large
enough. The term to maturity $\hat{T}$ is an important factor in
determining hedge effectiveness. However, the difference in hedge
effectiveness is only marginal when $\hat{T}$ increases from $30$ to
$40$ years for an annuity portfolio consisting of a single cohort
aged 65 initially. This is due to the fact that only a small number
of policies will still be in-force after a long period of time (30
to 40 years), and index-based instruments turn out to be ineffective
when idiosyncratic mortality risk becomes a larger contributor to
the overall risk, compared to systematic mortality risk. The effect of
the portfolio size $n$ on hedge effectiveness is quantified and
compared with the result obtained in \cite{LiHa11} where population
basis risk is taken into account. In addition, we find that the
skewness of the surplus distribution of a cap-hedged portfolio is
sensitive to the term to maturity and the portfolio size, and, as a
result, the difference between a longevity swap and a cap when used
as hedging instruments becomes more pronounced for larger $\hat{T}$
and $n$.

As discussed in \cite{BiffisBl14}, developing a liquid longevity
market requires reliable and well-designed financial instruments
that can attract sufficient amount of interests from both buyers and
sellers. Besides of a longevity swap, which is so far a common
longevity hedging choice for annuity providers, option-type
instruments such as longevity caps can provide hedging features that
linear products cannot offer. A longevity cap is shown to have
alternative hedging properties compared to a swap, and this
option-type instrument would also appeal to certain classes of
investors interested in receiving premiums by selling a longevity
insurance. Further research on the design of longevity-linked
instruments from the perspectives of buyers and sellers would
provide a further step towards the development of an active longevity
market.

\appendix
\section{Appendix} \label{A2}
To simulate death times of annuitants, we notice that once a sample of the mortality intensity is obtained, the Cox process becomes an inhomogeneous Poisson process and the first jump times, which are interpreted as death times, can be simulated as follows (see e.g. \cite{BrigoMe07}):
\begin{enumerate}
  \item Simulate the mortality intensity $\mu_x(t)$ from $t=0$ to $t=\omega-x$.
  \item Generate a standard exponential random variable $\xi$. For example, using an inverse transform method, we have $\xi = -\ln{(1-u)}$ where $u \sim \text{Uniform}(0,1)$.
  \item Set the death time $\tau$ to be the smallest $T$ such that $\xi \leq \int^T_0 \mu_x(s) \,ds$. If $\xi > \int^{\omega-x}_0 \mu_x(s) \,ds$ then set $\tau = \omega-x$. 
  \item Repeat step (2) and (3) to obtain another death time.
\end{enumerate}
The payoff of an index-based hedging instrument depends on the
realised survival probability $\exp\{-\int^t_0 \mu_x(v)\,dv\}$. The
payoff of a customised instrument, on the other hand, depends
on the proportion of survivors, $\frac{n-N_t}{n}$, underlying an
annuity portfolio where the number of deaths, $N_t$, is obtained by
counting the number of simulated death times that are smaller than
$t$. Note that
\begin{equation}
e^{-\int^t_0 \mu_x(v)\,dv} \approx \frac{n-N_t}{n}
\end{equation}
and the accuracy of the approximation improves when $n$ increases.


\bibliographystyle{elsart-harv}
\bibliography{mcf}

\end{document}